\documentclass[10pt,a4paper]{article}

\usepackage[dvipsnames]{xcolor}

\usepackage[utf8]{inputenc}
\usepackage[T1]{fontenc}
\usepackage[english]{babel}

\usepackage{amsmath}
\usepackage{amsfonts}
\usepackage{amssymb}
\usepackage{amsthm}

\usepackage[normalem]{ulem}

\usepackage{enumerate}

\newtheorem{thm}{Theorem}[section]
\newtheorem{lem}[thm]{Lemma}
\newtheorem{cor}[thm]{Corollary}
\newtheorem{propo}[thm]{Proposition}

\newtheoremstyle{problem}{}{}{}{0pt}{}{}{0pt}{}
\theoremstyle{problem}
\newtheorem*{prb}{}

\usepackage{multicol}
\usepackage{graphicx}

\def\p{{\sf P}}
\def\np{{\sf NP}}

\def\npc{{\sf NP}-complete}
\def\npcn{{\sf NP}-completeness}

\date{}
\usepackage{authblk}

\author[1]{Andreas Darmann}
\author[2,3]{Janosch Döcker\thanks{The second author thanks the New Zealand Marsden Fund for their financial support.}}
\author[2]{Britta Dorn}

\affil[1]{Department of Operations and Information Systems, University of Graz, Austria}
\affil[2]{Department of Computer Science, University of Tübingen, Germany}
\affil[3]{School of Computer Science, University of Auckland, New Zealand}

\title{An even simpler hard variant of Not-All-Equal~3-SAT}

\begin{document}
\maketitle

\begin{abstract}
We show that \textsc{Not-All-Equal~3-Sat} remains \npc\ when restricted to instances that simultaneously satisfy the following properties: 
\begin{enumerate}[(i)]
\item The clauses are given as the disjoint union of $k$ partitions, for any fixed $k \geq 4$, of the variable set into subsets of size 3, and
\item each pair of distinct clauses shares at most one variable.  
\end{enumerate}
Property (i) implies that each variable appears in exactly $k$ clauses and each clause consists of exactly 3 unnegated variables. Therewith, we improve upon our earlier result (Darmann and Döcker, 2020). 
Complementing the hardness result for at least $4$ partitions, we show that for $k\leq 3$ the corresponding decision problem is in \p. In particular, for $k\in \{1,2\}$, all instances that satisfy Property (i) are nae-satisfiable.

By the well-known correspondence between \textsc{Not-All-Equal~3-Sat} and hypergraph coloring, we obtain the following corollary of our results: For $k\geq 4$, \textsc{Bicolorability} is \npc\ for linear 3-uniform $k$-regular hypergraphs even if the edges are given as a decomposition into $k$ perfect matchings; with the same restrictions, for $k \leq 3$ \textsc{Bicolorability} is in \p, and for  $k \in \{1,2\}$ all such hypergraphs are bicolorable.   

Finally, we deduce from a construction in the work by Pilz (Pilz, 2019) that every instance of \textsc{Positive Planar Not-All-Equal Sat} with at least three distinct variables per clause is nae-satisfiable. Hence, when restricted to instances with a planar incidence graph, each of the above variants of \textsc{Not-All-Equal~3-Sat} turns into a trivial decision problem. 

\end{abstract}

\section{Introduction}

\textsc{Not-All-Equal~3-Sat} is a notable variant of the classic \textsc{3-Sat} problem and frequently used as a base problem for polynomial-time reductions. Its \npcn\ has been shown by Schaefer~\cite{schaefer78} in the setting where each clause consists of at most three literals---in fact, he proves a more general result, known today as \textit{Schaefer's dichotomy theorem}, that implies \textsc{NP}-hardness of several interesting variants of the generalized Boolean satisfiability problem.  

Let $\mathcal{C}$ be a collection of clauses formed over a set of variables $V$. In an instance of \textsc{Not-All-Equal~3-Sat}, each clause consists of exactly three distinct literals, and the question is: Can we assign truth values to the variables in such a way that, for each clause, not all literals have the same truth value? In contrast to \textsc{$3$-Sat}, the decision problem corresponding to \textsc{Not-All-Equal~3-Sat} does not turn into a trivial problem in the absence of negations. Indeed, referring to this variant as \textsc{Positive Not-All-Equal~3-Sat}, it is known that \textsc{Positive Not-All-Equal~3-Sat} remains \npc\ even if both (i) the collection of clauses is \textit{linear}, that is, each pair of distinct clauses shares at most one variable and (ii) each variable appears in exactly four clauses. This result was shown by Darmann and Döcker~\cite{DarDoe2020}, earlier work by Porschen et al.~\cite{porschen14} obtained \npcn\ of \textsc{Positive Not-All-Equal~3-Sat} for linear instances without the variable bound in (ii). The bound stated in (ii) is, in fact, the best possible one as \textsc{Positive Not-All-Equal~3-Sat} can be solved in polynomial time if each variable appears in three clauses or less (see Filho~\cite[Thm.\ 3.3.2]{Filho}).
However, if the condition on the clause size is relaxed to additionally allow clauses with two variables, \textsc{Positive Not-All-Equal~Sat} remains \npc\ if each variable appears in at most three clauses as shown by Kratochv\'{i}l and Tuza~\cite[Thm.\ 2.1]{kratochvil02}. Dehghan et al.~\cite{dehghan15} showed that the same\footnote{The original construction of Dehghan et al.~\cite{dehghan15} contains duplications of some $2$-clauses. However, these duplications can be replaced by newly introduced  $2$-clauses without affecting nae-satisfiability, such that the resulting instance contains no duplicated clauses.} is true if each variable appears in \textit{exactly} three clauses.
A notational remark: \textsc{Positive Not-All-Equal~Sat} is also known as \textsc{Monotone Not-All-Equal~Sat}. Both names refer to the absence of negations, that is, each clause is a disjunction of unnegated variables.

Let us turn to the variant of \textsc{Positive Not-All-Equal~3-Sat} that is the focus of this article. We introduce it first by using an example, and we defer the formal definitions to Section~\ref{sec:prelim}. Consider the following set of 9 variables:
\[
V := \{a,\, b,\, c,\, d,\, e,\, f,\, g,\, h,\, i\}.
\]
First, we partition $V$ into subsets of size 3. For example,
\[
C_1 := \{\{a,\, b,\, c\},\, \{d,\, e,\, f\},\, \{g,\, h,\, i\}\}.
\]
The collection~$C_1$ of subsets can be viewed as the Boolean formula
\[
(a \vee b \vee c) \wedge (d \vee e \vee f) \wedge (g \vee h \vee i).
\]
By construction, $C_1$ is an instance of \textsc{Positive Not-All-Equal~3-Sat} in which each variable appears exactly once. As alluded to above, we need at least four appearances per variable to possibly enter the realm of \npc\ problems. In order to satisfy the necessary condition on the number of variable appearances, we introduce three additional partitions of $V$ into subsets of size~3:
\begin{align*}
C_2 &:= \{\{a,\, d,\, g\},\, \{b,\, e,\, i\},\, \{c,\, f,\, h\}\}, \\ 
C_3 &:= \{\{a,\, e,\, h\},\, \{b,\, f,\, g\},\, \{c,\, d,\, i\}\}, \\
C_4 &:= \{\{a,\, f,\, i\},\, \{b,\, d,\, h\},\, \{c,\, e,\, g\}\}.
\end{align*}
We obtain an instance of \textsc{Positive Not-All-Equal~3-Sat} in which each variable appears exactly four times by taking the union of $C_1$, $C_2$, $C_3$ and $C_4$. If a clause is contained in more than one partition, the involved variables appear less than four times in the resulting instance. Hence, if we require each variable to appear exactly four times, the partitions are necessarily pairwise disjoint. For the above example it is easily checked that $C_1$, $C_2$, $C_3$ and $C_4$ are pairwise disjoint; also note that the collection of clauses $\mathcal{C} := \bigcup_{i=1}^4 C_i$ is linear. With a little effort one can see that $\mathcal{C}$ is a no-instance (see Proposition~\ref{propo:example} for a proof). 

We show in this article that, for any fixed $k \geq 4$, \textsc{Positive Not-All-Equal~3-Sat} remains \npc\ if restricted to instances with a variable set $V$ and a collection of clauses $\mathcal{C}$ that satisfies both of the following properties: 
\begin{enumerate}[(i)]
\item $\mathcal{C}$ is the disjoint union of $k$ partitions of $V$ into subsets of size 3, and
\item $\mathcal{C}$ is linear.
\end{enumerate}
Since each variable appears exactly $k$ times in this setting, we improve upon our earlier result~\cite{DarDoe2020}. In contrast, for $k \leq 3$, we show that \textsc{Positive Not-All-Equal~3-Sat} is in \p; in addition, for $k \in \{1,2\}$ each instance of \textsc{Positive Not-All-Equal~3-Sat} that satisfies Property (i) is a yes-instance. 

It is well known that \textsc{Positive Not-All-Equal~3-Sat} is equivalent to \textsc{Set Splitting}, and to \textsc{Bicolorability} for 3-uniform hypergraphs, that is, deciding whether the vertices of  a given 3-uniform hypergraph $H$ can be colored with 2 colors such that no edge is monochromatic (in which case we call $H$ bipartite). In fact, these problems are simply different views on the same question. Hence, our results for \textsc{Positive Not-All-Equal~3-Sat} imply the following: For any fixed $k \geq 4$, \textsc{Bicolorability} for linear 3-uniform $k$-regular hypergraphs, where the edges are given as a decomposition into $k$ perfect matchings, is \npc. For each such hypergraph $H$ on $n$ vertices, the matching number (i.e., the maximum size of a matching) is $\nu(H) = \frac{n}{3}$. For hypergraphs with bounded matching number, that is $\nu(H) \leq s$ for some fixed integer, Li and Spirkl~\cite[Thm.\ 2.1]{li2023} showed that \textsc{Bicolorability} for hypergraphs with edge size at most 3 can be decided in polynomial time. However, there are also families of hypergraphs with bounded matching number for which \textsc{Bicolorability} is \npc; see the dichotomies obtained by Li and Spirkl~\cite[Theorems 1.4 and 1.5]{li2023}. 

This work is structured as follows. In Section~\ref{sec:prelim} we present the preliminaries including formal problem statements and main proof techniques used. The   main results of this paper concerning the computational complexity status of  the above mentioned restricted variants of \textsc{Positive Not-All-Equal~3-Sat} are presented in Section~\ref{sec:main-results}. In Section~\ref{sec:additional} we provide additional results for three different settings that, in contrast to the restrictions considered in Section~\ref{sec:main-results}, consider non-monotone instances, planar instances, or contain clauses of size 2 and 3. Finally, we conclude the paper with some remarks and possible directions for future research (Section~\ref{sec:conclusion}).

\section{Preliminaries}\label{sec:prelim}

Let $V := \{x_1, x_2, \ldots, x_n\}$ be a set of \emph{variables}. A \emph{literal} is a variable $x_i$ or its negation $\overline{x_i}$; the set of literals corresponding to $V$ is $\mathcal{L}_V = \{x_i, \overline{x_i} \mid 1 \leq i \leq n\}$. A \emph{clause} is a subset of $\mathcal{L}_V$ and is called \emph{positive} if it is a subset of $V$. If a clause contains exactly $k$ distinct literals, we call it a \emph{$k$-clause}. A \emph{collection of clauses}, also called a \emph{formula}, is a  set $\mathcal{C} := \{c_1, c_2, \ldots, c_m\}$ of clauses. A collection of clauses is \emph{linear} if and only if each pair of distinct clauses shares at most one variable. A \emph{truth assignment} is a mapping $\beta\colon V \rightarrow \{T, F\}$, where $T$ and $F$ represent the truth values true and false.  We say that a truth assignment $\beta'\colon V' \rightarrow \{T,F\}$ \emph{extends} a truth assignment $\beta\colon V \rightarrow \{T, F\}$ if and only if $V\subseteq V'$ and $\beta'(x_i) = \beta(x_i)$ for each $x_i \in V$. A truth assignment $\beta$ \emph{satisfies} a literal $x_i$ if $\beta(x_i)=T$ and a literal $\overline{x_i}$ if $\beta(x_i)=F$. If a literal is satisfied, we call it a \emph{true} literal and otherwise, we call it a \emph{false} literal. 

A truth assignment $\beta$ extends to clauses as follows: If a clause $c_j$ has at least one true literal under $\beta$, the clause $c_j$ is \emph{satisfied}. 
Otherwise,  if clause $c_j$ has only false literals under $\beta$, then clause $c_j$ is not satisfied. 
Now, if the clause~$c_j$ has both a true and a false literal under $\beta$, that is, the truth values of the literals are \emph{not all equal}, we say that the truth assignment $\beta$ \emph{nae-satisfies} the clause $c_j$. A truth assignment $\beta$ nae-satisfies a collection $\mathcal{C}$ of clauses if and only if $\beta$ nae-satisfies each clause in $\mathcal{C}$, and $\beta$ satisfies $\mathcal{C}$ if and only if it satisfies each clause in $\mathcal{C}$.  

A \emph{partition} of a variable set $V$ is defined in the same way as a partition of arbitrary sets: Let $c_1, c_2, \ldots ,c_m$ be pairwise disjoint subsets of $V$. Then, $P = \bigcup_{j = 1}^{m}\{c_j\}$ is a partition of $V$ if and only if $\bigcup_{j = 1}^m c_j  = V$, that is, taking the union of the elements in $P$ yields $V$. We deliberately chose $c_j$ for the subsets of $V$ in this definition as we view each element of $P$ as a positive clause. Thus, a partition of $V$ can be viewed as collection of positive clauses in which each variable appears in exactly one clause. 
Moreover, taking the union of $k$ pairwise disjoint partitions of $V$ (that is, no clause appears in more than one partition) yields a formula in which each variable appears exactly $k$ times. 

A \emph{hypergraph}  $H := (\mathcal{V}, E)$ consists of a set of variables $\mathcal{V}$ and a set of edges $E = \{e_1, e_2, \ldots, e_m\}$, where each $e_j \in E$ is a subset of $\mathcal{V}$. A hypergraph $H$ is \emph{$r$-uniform}, if and only if $|e_j| = r$ for each $j \in \{1,2,\ldots,m\}$, and \emph{$s$-regular}, if and only if each variable in $\mathcal{V}$ appears in exactly $s$ edges. A hypergraph~$H$ is \emph{linear} if and only if $|e_i \cap e_j| \leq 1$ for all distinct $i, j \in \{1,2,\ldots,m\}$. A \emph{$k$-coloring} is a mapping $\gamma\colon \mathcal{V} \rightarrow \{1,2,\ldots, k\}$, where each number represents a different color. An edge $e \in E$ is \emph{monochromatic} under $\gamma$ if and only if $\gamma$ assigns the same color to each variable in $e$. A hypergraph is \emph{$k$-colorable} if and only if there exists a $k$-coloring $\gamma$ such that no edge is monochromatic under $\gamma$. For a $2$-colorable hypergraph, we also say that the hypergraph is \emph{bicolorable} or \emph{bipartite}. A \emph{matching} $M$ is a subset of the edge set $E$ such that the elements of $M$ are pairwise disjoint. A matching $M$ is called \emph{perfect} if and only if each variable in $\mathcal{V}$ appears in some element of $M$, that is, $M$ is a partition of $\mathcal{V}$.

\subsection{Problem statements}

We start with a definition of the decision problem \textsc{Not-All-Equal Sat}, which is in~\np.

\smallskip

\begin{prb}
\noindent\textsc{Not-All-Equal Sat} (\textsc{NAE-Sat})\\
{\bf Instance.} A set $V$ of variables, and a collection $\mathcal{C}$ of clauses formed over $V$, that is, each clause in $\mathcal{C}$ is a subset of the corresponding set of literals $\mathcal{L}_V$. 

\noindent
{\bf Question.} Is there a truth assignment for $V$ that nae-satisfies $\mathcal{C}$? \end{prb}

Throughout this work, \textsc{NAE-$3$-Sat} denotes the restriction of \textsc{NAE-Sat} to instances in which each clause has exactly three distinct literals. 
In this paper, we focus on the variant of \textsc{NAE-$3$-Sat} where each clause is positive. For the sake of completeness, we give its formal definition below. 

\smallskip

\begin{prb}
\noindent\textsc{Positive NAE-$3$-Sat}\\
{\bf Instance.} A set $V$ of variables, and a collection $\mathcal{C}$ of positive $3$-clauses over $V$.  

\noindent
{\bf Question.} Is there a truth assignment for $V$ that nae-satisfies $\mathcal{C}$? \end{prb}

In Section~\ref{sub:2or3}, we will turn our attention to \textsc{NAE-$(2,3)$-Sat}, i.e., the restriction of \textsc{NAE-Sat} to instances in which each clause has either exactly two or exactly three distinct literals.

We obtain further variants of \textsc{NAE-Sat} by adding prefixes, a suffix or both to the above problem names. These additions to the problem name have in common that they restrict the base problem to instances that satisfy certain properties. For instance, if we add the prefix \textsc{Linear} to \textsc{NAE-Sat}, the collection $\mathcal{C}$ in an instance of \textsc{Linear NAE-Sat} is required to be linear.  

We use the following set of prefixes in this paper:
\[
\{\textsc{Linear},\, \textsc{Positive},\, \textsc{$k$-Disjoint},\, \textsc{Planar},\, \textsc{$3$-Connected} \}.
\]

We already alluded to the meaning of the prefix \textsc{Linear} above. The prefix \textsc{Positive} indicates that we only consider instances in which all clauses are positive, and \textsc{$k$-Disjoint} means that we require that the collection of clauses $\mathcal{C} = \bigcup_{i=1}^k C_i$ is given as a partition $\{C_1, C_2, \ldots, C_k\}$ of $\mathcal{C}$ such that, for each $i \in \{1,2,\ldots, k\}$, the collection $C_i$ has at most one appearance of each variable. The two prefixes \textsc{Planar} and \textsc{$3$-Connected} refer to properties of the bipartite \emph{incidence graph} obtained by regarding the variables and clauses as vertices and adding an edge $\{x_i, c_j\}$ if and only if the variable $x_i$ appears in the clause $c_j$; that is, the incidence graph is required to be planar and 3-connected, respectively.

The suffix $k$, for example in \textsc{Positive NAE-$3$-Sat-$k$}, means that each variable appears in at most $k$ clauses. Further, if we require each variable to appear in \emph{exactly} $k$ clauses, we use the suffix \textsc{E$k$} instead. If no suffix is used, then we do not impose a restriction on the number of variable appearances. 

Note that   for \textsc{NAE-$3$-Sat} the prefixes \textsc{Positive} and \textsc{$k$-Disjoint} combined with the suffix \textsc{E$k$} can be summarized by the property that each collection $C_i$ in the disjoint union of clauses $\mathcal{C} = \bigcup_{i=1}^k C_i$ is a partition of the variable set $V$ into subsets of size 3. Hence, we obtain the following equivalent definition of \textsc{Positive Linear $k$-Disjoint NAE-$3$-Sat-E$k$} that explicitly reflects Property (i) in the abstract; in particular, it explicitly states that each $C_i$ is required to be a partition of $V$.

\noindent\fbox{%
    \parbox{.97\textwidth}{%
        \begin{prb}
\noindent\textsc{Positive Linear $k$-Disjoint NAE-$3$-Sat-E$k$}\\
{\bf Instance.} A set $V$ of variables, and collections~$C_1, C_2, \ldots, C_k$ of positive 3-clauses that have the following properties: 
\begin{enumerate}[(i)]
\item $C_i$ is a partition of $V$ for each $i \in \{1, \ldots k\}$,
\item $C_i \cap C_j = \emptyset$ if $i \neq j$, and
\item $|c \cap c'| \leq 1$ for each pair of distinct clauses $c, c' \in  C_1 \cup C_2 \cup \ldots \cup C_k$.
\end{enumerate}
{\bf Question.} Is there a truth assignment for $V$ that nae-satisfies $\mathcal{C} = \bigcup_{i=1}^k C_i$? \end{prb}
}
}

The second property in the definition above ensures that $\mathcal{C} = \bigcup_{i=1}^k C_i$ is a disjoint union and, thus, that each variable appears exactly $k$ times in $\mathcal{C}$. Without this property, the union of $k$ partitions of the variable set may have less than $k$ appearances of some variables, which contradicts the suffix \textsc{E$k$}. The third property is the linearity condition in the absence of negations.

Finally, we define the decision problem \textsc{$\mathcal{H}$-Bicolorability} for hypergraphs. To this end, let $\mathcal{H}$ be a family of hypergraphs. 
\begin{prb}
\noindent\textsc{$\mathcal{H}$-Bicolorability}\\
{\bf Instance.} A hypergraph $H = (\mathcal{V}, E)$ such that $H \in \mathcal{H}$.  

\noindent
{\bf Question.} Is $H$ bicolorable? \end{prb}
In this paper, we consider the family $\mathcal{H}$ of linear 3-uniform $k$-regular hypergraphs such that, for each $H=(\mathcal{V}, E) \in \mathcal{H}$, the edge set $E = \bigcup_{i = 1}^k M_i$ is given as a decomposition into perfect matchings $M_1, \ldots, M_k$. Observe that for this family of hypergraphs, \textsc{$\mathcal{H}$-Bicolorability} and \textsc{Positive Linear $k$-Disjoint NAE-$3$-Sat-E$k$} are equivalent problems or, as worded in the introduction, these decision problems are different views on the same question.

\subsection{Proof techniques}

In the following, we discuss techniques used throughout the paper. 

Let $\mathcal{C}$ be a collection of positive $3$-clauses over a set of variables $V$. 
Further, let $\beta'$ be a truth assignment for a strict subset $V'$ of $V$. Assume we would like to know if we can extend $\beta'$ by setting truth values for the variables in $V\setminus V'$ such that the resulting truth assignment nae-satisfies all clauses in $\mathcal{C}$. To this end, we may try to use $\beta'$ and the clauses in $\mathcal{C}$ to infer additional information. Let $c := \{x, y, z\}$ be a clause in $\mathcal{C}$. We distinguish the following four cases.
\begin{enumerate}[(i)]
\item Case $|c \cap V'| = 3$: Then, $\beta'$ sets all variables in $c$ to some truth value. If these truth values are not all equal, we do not infer any new information from $c$. Otherwise, $c$ is not nae-satisfied and we are done.     
\item Case $|c \cap V'| = 2$: Then, $\beta'$ sets two of the variables, say $y$ and $z$, in $c$ to some truth value. If $\beta'(y) \neq \beta'(z)$, then $c$ is already nae-satisfied. Otherwise, we deduce that $x$ must be set to $\{T, F\}\setminus \beta'(y)$.    
\item Case $|c \cap V'| = 1$: Then, $\beta'$ sets one of the variables, say $z$, in $c$ to some truth value. If $\beta'(z) = T$, we deduce that $\{\bar{x}, \bar{y}\}$ must be satisfied but not necessarily nae-satisfied. If $\beta'(z) = F$, we must satisfy $\{x, y\}$. We will later explain how we use such \textit{learned} $2$-clauses to infer even more information.    
\item Case $|c \cap V'| = 0$: In this case, $\beta'$ does not assign a truth value to any variable in~$c$. Thus, we do not infer any new information from $c$. 
\end{enumerate}
Now, the information that we learn in the latter subcase of Case (ii) is straightforward. We obtain a new truth assignment $\beta''$ for $V'' := V' \cup \{x\} \subseteq V$ that we can use instead of $\beta'$ for further considerations. If $V'' = V$, we check if $\beta''$ nae-satisfies all clauses in $\mathcal{C}$ and we are done.       The usefulness of Case (iii) becomes apparent if we consider multiple clauses in turn. Thereby, we obtain a collection $\mathcal{C}_\textsc{Sat}$ of $2$-clauses over $V$. Let $\mathcal{C}_\textsc{Sat}$ be non-empty. A clause such as $\{\bar{x}, \bar{y}\}$ can also be viewed as an implication $x \Rightarrow \bar{y}$ or, equivalently, as $y \Rightarrow \bar{x}$. Hence, these $2$-clauses give rise to chains of implications 
\[
\ell_1 \Rightarrow \ell_2 \Rightarrow \cdots \Rightarrow \ell_r,   
\] 
where $\ell_i$ is some variable in $V$ or its negation, i.e., a literal in $\mathcal{L}_V$.
A notable special case is $\ell_1 = \ell_r$ in which case we call the chain \emph{cyclic}. In this case, we have 
\[
\beta(\ell_1) =  \beta(\ell_2) = \cdots = \beta(\ell_r)   
\]
for each truth assignment $\beta$ that satisfies the corresponding $2$-clauses. If this cyclic chain of implications forces three variables that form a clause in $\mathcal{C}$ to the same truth value, we conclude that it is not possible to nae-satisfy this clause while also satisfying $\mathcal{C}_\textsc{Sat}$. Let us consider the non-cyclic case next. If we know that $\ell_i$ is true, for example by Case (ii), this implies that all $\ell_j$ with $i < j \leq r$ must be set true as well. Again, this may lead to a conflict with a clause in $\mathcal{C}$.  \\

Once we inferred a collection $\mathcal{C}_\textsc{Sat}$ of $2$-clauses by Case (iii), we may use \emph{resolution} in order to infer additional $2$-clauses. For example, consider two clauses $\{x, y\}$ and $\{\bar{y}, \bar{z}\}$. Noting that the variable $y$ appears unnegated in the first clause and negated in the second clause, we can infer a $2$-clause as follows:
\[
\left (\{x, y\} \setminus \{y\}\right ) \cup \left ( \{\bar{y}, \bar{z}\} \setminus \{\bar{y}\} \right ) = \{x, \bar{z}\}.
\]
We say that we \emph{used} or \emph{performed} resolution to obtain $\{x, \bar{z}\}$. Then, adding $\{x, \bar{z}\}$ to a collection of clauses $\mathcal{C}_\textsc{Sat}$ that contain $\{x, y\}$ and $\{\bar{y}, \bar{z}\}$ has no effect on the satisfiability of $\mathcal{C}_\textsc{Sat}$. Importantly, we do not consider resolution in the setting of nae-satisfiability. We are always concerned with the satisfiability of a collection of clauses when using resolution. For a formal introduction to the \emph{resolution calculus}, we refer to the book by Schöning and Tor\'{a}n~\cite[Ch.\ 2]{schoening13}. \\

A truth assignment $\beta$ that nae-satisfies a collection of clauses exhibits a well-known symmetry in the sense that we obtain a second nae-satisfying truth assignment by interchanging the truth values true and false in $\beta$. For instance, we use this property of nae-satisfying truth assignments to simplify some case distinctions in Section~\ref{sec:main-results}.

\section{Computational complexity of \textsc{Positive Linear $k$-Disjoint NAE-$3$-Sat-E$k$}}\label{sec:main-results} 

This section is structured as follows. We begin with a proposition stating that a given collection of twelve $3$-clauses is not nae-satisfiable. In Section~\ref{sub:four-partitions} we then prove that \textsc{Positive 4-Disjoint NAE-$3$-Sat-E$4$} is \npc. After that, we show in Section~\ref{sub:linearity} that the problem remains \npc\ even if the collection of clauses is required to be linear: in particular, we show that \textsc{Positive Linear 4-Disjoint NAE-$3$-Sat-E$4$} is \npc. In Section~\ref{sub:general-k} we generalize this result by showing that, for any fixed integer $k \geq 4$, \textsc{Positive Linear $k$-Disjoint NAE-$3$-Sat-E$k$} is \npc. Together with the result that for $k \leq 3$, \textsc{Positive Linear $k$-Disjoint NAE-$3$-Sat-E$k$} is in \p\  (Section~\ref{sub:lessthanfour}), we obtain a  computational complexity dichotomy with respect to the number of partitions of the variable set (summarized in Section~\ref{sub:summary}).

\medskip

We begin with the below proposition concerning a no-instance of \textsc{Positive Linear $4$-Disjoint NAE-$3$-Sat-E$4$} referred to in the Introduction.

\begin{propo}\label{propo:example}
The collection $\mathcal{C}$ of the following clauses is not nae-satisfiable.

\begin{multicols}{4} 
\begin{enumerate}
\item $\{a,\, b,\, c\}$
\item $\{d,\, e,\, f\}$ 
\item $\{g,\, h,\, i\}$

\item $\{a,\, d,\, g\}$
\item $\{b,\, e,\, i\}$
\item $\{c,\, f,\, h\}$

\item $\{a,\, e,\, h\}$
\item $\{b,\, f,\, g\}$
\item $\{c,\, d,\, i\}$

\item $\{a,\, f,\, i\}$
\item $\{b,\, d,\, h\}$
\item $\{c,\, e,\, g\}$
\end{enumerate}
\end{multicols}
\end{propo}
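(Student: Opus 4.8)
The plan is to argue by contradiction, exploiting the fact that the twelve clauses form (an isomorphic copy of) the affine plane of order~$3$, equivalently a resolvable Steiner triple system on nine points, with the four parts $C_1,\dots,C_4$ playing the role of the parallel classes. Suppose $\beta$ is a truth assignment that nae-satisfies $\mathcal{C}$, and let $S := \{x \in V : \beta(x) = T\}$. Since every clause is nae-satisfied, each clause $c$ satisfies $1 \le |c \cap S| \le 2$. Each $C_i$ is a partition of the nine-element set $V$ into three triples, so $|S| = \sum_{c \in C_1}|c \cap S|$ is a sum of three terms each in $\{1,2\}$; hence $|S| \in \{3,4,5,6\}$. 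Replacing $\beta$ by its pointwise complement replaces $S$ by $V \setminus S$ and again nae-satisfies $\mathcal{C}$ (the standard symmetry of nae-satisfiability), so we may assume without loss of generality that $|S| \in \{3,4\}$. The one structural fact I would record first is that \emph{every pair of distinct variables lies together in exactly one clause of}~$\mathcal{C}$: the twelve clauses contribute $12\cdot 3 = 36$ variable-pairs counted with multiplicity, linearity of $\mathcal{C}$ forbids repetitions, and $\binom{9}{2} = 36$; so all $36$ pairs occur, each exactly once. In particular, any two variables of $S$ occur together in some clause.

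In the case $|S| = 3$, pick two variables of $S$ and let $c$ be the clause containing both, say $c \in C_j$. Then $|c \cap S| \ge 2$, while each of the two remaining clauses of the partition $C_j$ contains at least one variable of $S$; summing over $C_j$ gives $|S| \ge 2 + 1 + 1 = 4$, a contradiction.

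In the case $|S| = 4$, observe first that within any single part $C_j$ at most one clause can contain two variables of $S$, since two such clauses would force $|S| \ge 2 + 2 + 1 = 5$; hence at most four clauses of $\mathcal{C}$ contain two or more variables of $S$, one per part. On the other hand, each of the $\binom{4}{2} = 6$ two-element subsets of $S$ lies in a clause that therefore contains at least two variables of~$S$, and these six clauses are pairwise distinct, because two distinct pairs contained in a single triple would exhaust all three of its variables, contradicting $|c \cap S| \le 2$. Thus at least six clauses contain two or more variables of~$S$, contradicting the bound of four. Since both remaining cases are impossible, $\mathcal{C}$ admits no nae-satisfying truth assignment.

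I do not expect a serious obstacle here: the proof is short once one commits to tracking only the single quantity $|S|$ together with the pairs it contains, rather than branching on individual variable values. The two points that need care are the reduction of the count to $|S| \in \{3,4,5,6\}$ via the partition structure, and the pigeonhole between the six pairs inside $S$ and the four parts $C_1,\dots,C_4$; everything else follows immediately from linearity and a single pair count.
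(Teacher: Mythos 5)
Your proof is correct, and it takes a genuinely different route from the paper's. The paper argues locally: it branches on which variable of clause~1 is the odd one out (three cases up to the true/false symmetry), extracts the forced $2$-clauses in each case, and uses resolution to produce a cyclic implication chain $d \Rightarrow e \Rightarrow f \Rightarrow d$ that makes clause~2 monochromatic. You instead argue globally, tracking only the set $S$ of true variables: the partition structure pins $|S|$ to $\{3,\dots,6\}$, complementation reduces to $|S|\in\{3,4\}$, and the fact that the twelve triples cover each of the $\binom{9}{2}=36$ pairs exactly once (which does require linearity; that is asserted in the paper's introduction and is a routine finite check, so it is worth one sentence of verification rather than a bare citation) yields a clean pigeonhole: six pairs inside $S$ force six distinct clauses with $|c\cap S|\ge 2$, while the four parallel classes admit at most one such clause each. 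Every step checks out, including the subtle points that two distinct pairs in one triple would force $|c\cap S|=3$ and that two ``heavy'' clauses in one class would force $|S|\ge 5$. What your approach buys is conceptual clarity and generality: it exposes the instance as the affine plane $\mathrm{AG}(2,3)$ and shows that non-bicolorability is a pure counting consequence of resolvability plus the Steiner property, with no dependence on the particular labelling. What the paper's approach buys is consistency with the rest of the article: the resolution-and-implication-chain machinery is exactly the toolkit set up in the preliminaries and reused in Lemmas~3.2 and~3.4, so the proposition doubles as a warm-up for those arguments.
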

\begin{proof}
Assume towards a contradiction that there is a truth assignment 
\[
\beta\colon \{a,b,c,d,e,f,g,h,i\} \rightarrow \{T, F\}
\]
that nae-satisfies all of the above clauses. By symmetry of nae-satisfying assignments, it is sufficient to consider the following three cases for clause 1:

\medskip

\underline{Case $\beta(a) = T$ and $\beta(b) = \beta(c) = F$:} In this case, $\beta$ satisfies the following collection of 2-clauses:
\[
\{
\{\bar{d}, \bar{g}\},\, \{e, g\},\, 
\{\bar{e}, \bar{h}\},\, \{f, h\},\, 
\{\bar{f}, \bar{i}\},\, \{d, i\}
\}.
\]
Using resolution, we obtain a cyclic chain of implications: $d \Rightarrow e \Rightarrow f \Rightarrow d$. Hence, $\beta(d) = \beta(e) = \beta(f)$ and clause 2 is not nae-satisfied, which is a contradiction to our assumption that $\beta$ nae-satisfies all clauses in $\mathcal{C}$.

\medskip

\underline{Case $\beta(b) = T$ and $\beta(a) = \beta(c) = F$:} In this case, $\beta$ satisfies the following collection of 2-clauses:
\[
\{
\{\bar{d}, \bar{h}\},\, \{e, h\},\, 
\{\bar{e}, \bar{i}\},\, \{f, i\},\, 
\{\bar{f}, \bar{g}\},\, \{d, g\}
\}.
\]
Using resolution, we obtain the same cyclic chain of implications and, thus, a contradiction as in the previous case.

\medskip

\underline{Case $\beta(c) = T$ and $\beta(a) = \beta(b) = F$:} Then, it follows that $\beta$ satisfies the following collection of 2-clauses:
\[
\{
\{\bar{d}, \bar{i}\},\, \{e, i\},\, 
\{\bar{e}, \bar{g}\},\, \{f, g\},\, 
\{\bar{f}, \bar{h}\},\, \{d, h\}
\}.
\]
Using resolution, we obtain the same cyclic chain of implications and, thus, a contradiction as in the previous cases.\\

There is hence no assignment that nae-satisfies the given collection of clauses. \end{proof}

\subsection{Hardness if clauses decompose into four partitions}\label{sub:four-partitions}

 We begin our computational complexity study with proving \npcn\ of \textsc{Positive 4-Disjoint NAE-$3$-Sat}. Along that way, we make use of the following two lemmata.

\begin{lem}\label{lem:eq-gadget-nae-sat}
Let $\operatorname{EQ}(x_1, x_2, x_3, x_4)$ bet the following set of clauses, where $V_\text{aux} = \{a, b, c,d,e,f,g,h,i\}$ are newly introduced variables.  

\begin{multicols}{4} 
\begin{enumerate}
\item $\{a,\, h,\, x_2\}$
\item $\{b,\, d,\, x_4\}$ 
\item $\{c,\, e,\, i \}$
\item $\{f,\, g,\, x_1\}$

\item $\{a,\, g,\, x_4\}$
\item $\{b,\, e,\, x_3\}$
\item $\{d,\, i,\, x_1\}$
\item $\{c, f, h \}$

\item $\{a,\, f,\, x_3\}$
\item $\{b,\, i,\, x_2\}$
\item $\{e,\, h,\, x_1\}$
\item $\{c,\, d,\, g \}$

\item $\{a,\, b,\, c\}$
\item $\{d,\, e,\, f\}$
\item $\{g,\, h,\, i\}$
\end{enumerate}
\end{multicols}

Then, a truth assignment $\beta$ for $X := \{x_1, x_2, x_3, x_4\}$ can be extended to a truth assignment $\beta'$ for $X \cup V_{\text{aux}}$ that nae-satisfies all clauses in $\operatorname{EQ}(x_1, x_2, x_3, x_4)$ if and only if $\beta$ assigns the same truth value to all variables in $X$. 
\end{lem}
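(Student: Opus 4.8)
The statement is a biconditional, and one direction is routine while the other is the real work. For the ``if'' direction, I would start from an assignment $\beta$ that makes all of $x_1,x_2,x_3,x_4$ equal; by the symmetry of nae-satisfiability it suffices to handle the case $\beta(x_1)=\beta(x_2)=\beta(x_3)=\beta(x_4)=T$. I would then exhibit an explicit assignment to $V_{\text{aux}}=\{a,\dots,i\}$ and check that every one of the fifteen clauses contains both a true and a false literal. A natural candidate is suggested by the structure: clauses 13--15 are exactly the three rows $\{a,b,c\},\{d,e,f\},\{g,h,i\}$, so within each of those triples the auxiliary variables must not be constant, and the remaining twelve clauses each pair two auxiliary variables with one $x_i$. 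With all $x_i$ true, each such clause is nae-satisfied as long as at least one of its two auxiliary variables is false. I expect a choice such as $\beta'(a)=\beta'(d)=\beta'(g)=F$ and $\beta'(b)=\beta'(c)=\beta'(e)=\beta'(f)=\beta'(h)=\beta'(i)=T$ (or some similarly symmetric split) to work; I would verify this clause-by-clause, and the complementary assignment handles the all-false case.

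For the ``only if'' direction I assume $\beta'$ nae-satisfies all fifteen clauses and must deduce $\beta(x_1)=\beta(x_2)=\beta(x_3)=\beta(x_4)$. The plan is to run exactly the inference machinery from the Preliminaries: from clauses 13--15 and whatever partial information we have, extract learned $2$-clauses via Case~(ii)/(iii), then apply resolution to build implication chains. The cleanest route is a contradiction argument in the spirit of Proposition~\ref{propo:example}: suppose two of the $x_i$ receive different truth values, use symmetry to fix (say) $\beta(x_1)=T$ and consider cases for the others, and show that the twelve ``mixed'' clauses then force one of the triples $\{a,b,c\}$, $\{d,e,f\}$, $\{g,h,i\}$ to be monochromatic, contradicting the nae-satisfaction of clause 13, 14, or 15. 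Concretely, knowing the value of an $x_i$ turns each clause $\{\star,\star,x_i\}$ into a learned $2$-clause on two auxiliary variables; stringing these together around the cyclic pattern built into the gadget (the same kind of cycle $d\Rightarrow e\Rightarrow f\Rightarrow d$ that appears in Proposition~\ref{propo:example}) should collapse a whole triple to one value.

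The main obstacle is bookkeeping: there are $\binom{4}{2}=6$ essentially different ways for the $x_i$ to fail to be all-equal, but symmetry of nae-assignments and the symmetry among the roles of $x_1,x_2,x_3,x_4$ in the gadget should cut this down to one or two cases. I would first check whether the gadget is genuinely symmetric under permuting the $x_i$ (it looks close to symmetric but not obviously fully so), and if it is only partially symmetric I would be careful to cover the inequivalent cases, e.g.\ $\beta(x_1)\neq\beta(x_2)$ versus $\beta(x_1)\neq\beta(x_4)$. Once the case list is pinned down, each case is a short resolution computation of the type already carried out in the proof of Proposition~\ref{propo:example}, so the remaining work is mechanical verification rather than a new idea.
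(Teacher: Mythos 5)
There is a genuine gap in both directions as written. For the ``if'' direction, your structural description of the gadget is wrong: you assert that the twelve clauses other than 13--15 ``each pair two auxiliary variables with one $x_i$,'' but only nine of them do; clauses 3, 8 and 12, namely $\{c,e,i\}$, $\{c,f,h\}$ and $\{c,d,g\}$, contain three auxiliary variables and no $x_i$. Consequently your candidate assignment ($a=d=g=F$, all other auxiliaries $T$, all $x_i=T$) fails: it makes $\{c,e,i\}$ and $\{c,f,h\}$ monochromatic. A correct choice must keep those three triples, as well as the rows $\{a,b,c\},\{d,e,f\},\{g,h,i\}$, non-constant; the paper uses $\beta'(c)=\beta'(d)=\beta'(h)=T$ and $F$ on the remaining six auxiliaries. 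You do say you would verify clause-by-clause, but as submitted the construction is incorrect and the error traces back to a misreading of the gadget, not a typo.

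For the ``only if'' direction you give only a plan, and it understates the work. The paper does not case on pairs $x_i\neq x_j$; it cases on which of $a,b,c$ is the minority literal in clause 13, shows the first two cases admit no nae-satisfying assignment at all, and in the third case forces $x_4=T$ and then, via three separate sub-arguments, $x_1=x_2=x_3=T$. Your alternative casing (assume $x_1\neq x_j$, derive a contradiction) is the strategy the paper uses for the \emph{other} gadget $\operatorname{EQ_\text{lin}}$ in Lemma~\ref{lem:eq-gadget-linear-nae-sat}, and it could plausibly be made to work here, but a single inequality such as $\beta(x_1)=T$, $\beta(x_2)=F$ yields only five learned $2$-clauses ($\{\bar f,\bar g\},\{\bar d,\bar i\},\{\bar e,\bar h\},\{a,h\},\{b,i\}$), which do not close into a cyclic implication chain on their own; you would still need a further sub-case analysis on clause 13 or on the purely auxiliary clauses 3, 8, 12 to reach a contradiction. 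So the claim that each case reduces to ``a short resolution computation of the type already carried out in Proposition~\ref{propo:example}'' is not justified, and the key deductions are missing. You correctly flag that the gadget is not symmetric under permuting the $x_i$ (indeed $x_1$ occurs three times and $x_2,x_3,x_4$ twice each), but you do not resolve which cases are actually inequivalent, so the case list itself is not pinned down.
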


\begin{proof}
First,  consider  the case that $\beta$ assigns the same truth value to all variables in $X$. By symmetry of nae-assignments, we may assume that $\beta(x_i) = T$ for each $x_i \in X$. Now, assigning $T$ to all variables in $\{c, d, h\}$ and $F$ to all variables in $V_{\text{aux}}\setminus\{c, d, h\}$ nae-satisfies all clauses in $\operatorname{EQ}(x_1, x_2, x_3, x_4)$. \\

Next, we show that each truth assignment $\beta$ for $X \cup V_{\text{aux}}$ that nae-satisfies all clauses in $\operatorname{EQ}(x_1, x_2, x_3, x_4)$ has the property $\beta(x_1) = \beta(x_2) = \beta(x_3) = \beta(x_4)$. To this end, we use clause 13 for a case distinction. By symmetry of nae-satisfying assignments, it is sufficient to consider the following three cases:

\medskip

\underline{Case $\beta(a) = T$ and $\beta(b) = \beta(c) = F$:} Then, by the clauses 1, 10, 5, 2, 9, 6, 8 and 12 it follows that $\beta$ satisfies the following collection of 2-clauses:

\[
\mathcal{C}_\textsc{Sat} = 
\{
\{\overline{h}, \overline{x_2}\}, \{i, x_2\},  \{\overline{g}, \overline{x_4}\}, \{d, x_4\}, \{\overline{f}, \overline{x_3}\}, \{e, x_3\},
\{f, h\}, \{d, g\}
\}.
\]
The first six clauses in $\mathcal{C}_\textsc{Sat}$ are equivalent to the following chains of implications:
\begin{align*}
h \Rightarrow \overline{x_2} \Rightarrow i, && g \Rightarrow \overline{x_4} \Rightarrow d, && f \Rightarrow \overline{x_3} \Rightarrow e.
\end{align*}
Hence, we have $g \Rightarrow d$ (by the chain with $\overline{x_4}$ in the middle) and, thus, $\beta(d) = T$ by $\{d,g\} \in \mathcal{C}_\textsc{Sat}$. Then, by clauses 7 and 14, it follows that $\beta$ also satisfies

\[
\mathcal{C}'_\textsc{Sat} = 
\{
 \{\overline{i}, \overline{x_1}\}, \{\overline{e}, \overline{f}\}
\}
\]

and we can extend the chains of implications derived from $\mathcal{C}_\textsc{Sat}$ as follows:
\begin{align*}
h \Rightarrow \overline{x_2} \Rightarrow i \Rightarrow \overline{x_1}, && g \Rightarrow \overline{x_4} \Rightarrow d, && f \Rightarrow \overline{x_3} \Rightarrow e \Rightarrow \overline{f}.
\end{align*}
Hence, $\beta(f) = F$ and, by clause 4, it follows that $\beta$ must satisfy $\{g, x_1\}$. Thus, 
\begin{equation}\label{eq:implication_chain_hig}
h \Rightarrow \overline{x_2} \Rightarrow i \Rightarrow \overline{x_1} \Rightarrow g
\end{equation}
Since $\beta(f) = F$, we have $\beta(h) = T$ by the clause $\{f, h\} \in \mathcal{C}_\textsc{Sat}$. But then, the implication chain~\eqref{eq:implication_chain_hig} implies that $\beta(h) = \beta(i) = \beta(g) = T$ and clause 15 is not nae-satisfied. We conclude that no truth assignment $\beta$ exists with $\beta(a) = T$ and $\beta(b) = \beta(c) = F$ that nae-satisfies all clauses in $\operatorname{EQ}(x_1, x_2, x_3, x_4)$. 

\medskip

\underline{Case $\beta(b) = T$ and $\beta(a) = \beta(c) = F$:}
Then, by the clauses 2, 5, 6, 9, 10, 1, 3 and 12  it follows that $\beta$ satisfies the following collection of 2-clauses:

\[
\mathcal{C}_\textsc{Sat} = 
\{
\{\overline{d}, \overline{x_4}\}, \{g, x_4\},  \{\overline{e}, \overline{x_3}\}, \{f, x_3\}, \{\overline{i}, \overline{x_2}\}, \{h, x_2\},
\{e, i\}, \{d, g\}
\}.
\]

The first six clauses in $\mathcal{C}_\textsc{Sat}$ are equivalent to the following chains of implications:
\begin{align*}
d \Rightarrow \overline{x_4} \Rightarrow g, && e \Rightarrow \overline{x_3} \Rightarrow f, && i \Rightarrow \overline{x_2} \Rightarrow h.
\end{align*}
Hence, we have $d \Rightarrow g$ (by the chain with $\overline{x_4}$ in the middle) and, thus, $\beta(g) = T$ by $\{d,g\} \in \mathcal{C}_\textsc{Sat}$. Then, by clauses 4 and 15, it follows that $\beta$ also satisfies

\[
\mathcal{C}'_\textsc{Sat} = 
\{
 \{\overline{f}, \overline{x_1}\}, \{\overline{h}, \overline{i}\}
\}
\] 

and we can extend the chains of implications derived from $\mathcal{C}_\textsc{Sat}$ as follows:
\begin{align*}
e \Rightarrow \overline{x_3} \Rightarrow f \Rightarrow \overline{x_1}, && i \Rightarrow \overline{x_2} \Rightarrow h \Rightarrow \overline{i}.
\end{align*}
Hence, $\beta(i) = F$ and, thus, $\beta(e) = T$ by $\{e, i\} \in \mathcal{C}_\textsc{Sat}$. Since $\beta(e) = T$, the implication chain $e \Rightarrow \overline{x_3} \Rightarrow f \Rightarrow \overline{x_1}$ implies $\beta(f) = T$ and $\beta(x_1) = F$. Now, $\beta(e) = \beta(f) = T$ implies $\beta(d) = F$ by clause 14. But then $\beta(d) = \beta(i) = \beta(x_1) = F$ and clause 7 is not nae-satisfied. Again, we conclude that there is no nae-satisfying truth assignment with $\beta(b) = T$ and $\beta(a) = \beta(c) = F$.  

\medskip

\underline{Case $\beta(c) = T$ and $\beta(a) = \beta(b) = F$:}
Then, by the clauses 3, 8, 12, 1, 2, 5, 6, 9 and 10 it follows that $\beta$ satisfies the following collection of 2-clauses:
\[
\mathcal{C}_\textsc{Sat} = 
\{
\{\overline{e}, \overline{i}\}, \{\overline{f}, \overline{h}\}, \{\overline{d}, \overline{g}\}, \{h, x_2\}, \{d, x_4\}, \{g, x_4\}, \{e, x_3\}, \{f, x_3\}, \{i, x_2\}
\}.
\]

We obtain the implications

\begin{align}\label{eq:implication_chain_x4}
e \Rightarrow \overline{i}, && f \Rightarrow \overline{h}, && d \Rightarrow \overline{g}, && \overline{g} \Rightarrow x_4 && \overline{d} \Rightarrow x_4.
\end{align}
The latter three implications imply $\beta(x_4) = T$. Now, we need to show that $\beta(x_i) =  T$, $1 \leq i \leq 3$, is necessary to nae-satisfy all clauses in $\operatorname{EQ}(x_1, x_2, x_3, x_4)$. 

First, consider $\beta(x_1) = F$. Then, by the clauses 4, 7 and 11 it follows that $\beta$ satisfies $\{f,g\}, \{d, i\}$ and $\{e, h\}$. Together with the first three implications in~\eqref{eq:implication_chain_x4}, we get the following cyclic chain of implications:   
\[
e \Rightarrow \overline{i} \Rightarrow d \Rightarrow \overline{g} \Rightarrow f \Rightarrow \overline{h} \Rightarrow e. 
\]
But then $\beta(d) = \beta(e) = \beta(f)$ and $\beta(g) = \beta(h) = \beta(i)$. Thus, neither clause 14 nor clause 15 are nae-satisfied. 

Second, consider $\beta(x_2) = F$. Then, we have $\beta(i) = \beta(h) = T$ by the clauses $\{i, x_2\}, \{h, x_2\} \in \mathcal{C}_\textsc{Sat}$ and, thus, $\beta(e) = \beta(f) = F$ by $\{\overline{e}, \overline{i}\}, \{\overline{f}, \overline{h}\} \in \mathcal{C}_\textsc{Sat}$. Since $\beta(e) = \beta(f) = F$, we have $\beta(d) = T$ by clause 14. As shown in the previous subcase, we have $\beta(x_1) = T$ if $\beta$ nae-satisfies all clauses in $\operatorname{EQ}(x_1, x_2, x_3, x_4)$. But then $\beta(d) = \beta(i) = \beta(x_1) = T$ and clause 7 is not nae-satisfied.  

Third, consider $\beta(x_3) = F$. Then, we have $\beta(e) = \beta(f) = T$ by the clauses $\{e, x_3\}, \{f, x_3\} \in \mathcal{C}_\textsc{Sat}$ and, thus, $\beta(i) = \beta(h) = F$ by $\{\overline{e}, \overline{i}\}, \{\overline{f}, \overline{h}\} \in \mathcal{C}_\textsc{Sat}$. Since $\beta(i) = \beta(h) = F$, we have $\beta(g) = T$ by clause 15. Recalling that we have $\beta(x_1) = T$, we conclude that clause 4 is not nae-satisfied. 

We conclude that each truth assignment for $X \cup V_\text{aux}$ that nae-satisfies all clauses in $\operatorname{EQ}(x_1, x_2, x_3, x_4)$ sets all $x_i \in X$ to the same truth value. 
\end{proof}

\begin{lem}\label{lem:disjoint-to-partition}
Let $V$ be a set of variables and let $\mathcal{C}$ be a collection of positive 3-clauses with $\mathcal{C} = \bigcup_{q = 1}^4 C_q$ such that $C_1$ is a partition of $V$ and, for $q \geq 2$, the set $C_q$ contains pairwise disjoint clauses with
\[
|V_2| = |V_3| = |V_4|, 
\]
where $V_q \subseteq V$ denotes the subset of variables that do not appear in $C_q$. 

Then, in polynomial time, we can construct a collection $\mathcal{C}' = \bigcup_{q = 1}^4 C'_q$ of positive 3-clauses over a set of variables $V'$ such that each $C'_q$ is a partition of $V'$ and $\mathcal{C}'$ is nae-satisfiable if and only if $\mathcal{C}$ is nae-satisfiable.

Further, if $C_i \cap C_j = \emptyset$ for $1 \leq i < j \leq 4$, then $C'_i \cap C'_j = \emptyset$ for $1 \leq i < j \leq 4$. 
\end{lem}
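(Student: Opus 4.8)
The plan is to turn each of the ``deficient'' collections $C_2, C_3, C_4$ into a genuine partition of a common enlarged variable set, while controlling nae-satisfiability by forcing the newly added variables to behave like constants. Concretely, for $q \in \{2,3,4\}$ let $V_q \subseteq V$ be the set of $t := |V_2| = |V_3| = |V_4|$ variables missing from $C_q$. First I would introduce, for each $q$, a fresh ``padding'' variable set together with enough copies of the equality gadget $\operatorname{EQ}$ from Lemma~\ref{lem:eq-gadget-nae-sat} to tie all padding variables (and, if convenient, one anchor variable per gadget) to a single common truth value. The key point is that $\operatorname{EQ}(x_1,x_2,x_3,x_4)$ is nae-satisfiable for a given assignment to $\{x_1,\dots,x_4\}$ \emph{iff} that assignment is constant on $\{x_1,\dots,x_4\}$, and its own $15$ auxiliary clauses already form four disjoint partitions of the nine auxiliary variables (this is exactly Property (i) read off the gadget's display). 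So a chain of such gadgets, sharing arguments appropriately, lets me construct a pool of variables all of which must receive the same value, with the gadget clauses themselves contributing one partition-triple to each $C'_q$ per gadget.

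Next I would use these constant-valued padding variables to complete $C_2, C_3, C_4$ into partitions. For each $q$, the $t$ missing variables in $V_q$, together with $t$ padding variables dedicated to index $q$, must be grouped into $3$-clauses covering exactly $V_q$ plus those padding variables; since padding variables are all forced to the same value, any clause of the form $\{v, p, p'\}$ with $v \in V_q$ and $p,p'$ padding and $\beta(p)=\beta(p')$ is nae-satisfied precisely when $\beta(v) \ne \beta(p)$ — but we do not want to constrain $v$, so instead I would pair each $v \in V_q$ with padding variables that are split across the two values, i.e. arrange the padding so that each completion clause contains one padding variable set to $T$ and one set to $F$ (achievable by using \emph{two} disjoint pools of constants, a ``$T$-pool'' and an ``$F$-pool'', which are independently forced to be constant but, by the symmetry of nae-assignments, can be the two different constants — or, more robustly, by adding one extra gadget clause that forbids them from coinciding). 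Then every completion clause is automatically nae-satisfied regardless of $\beta(v)$, so adding them changes nothing about satisfiability. The same padding variables must also be covered by the \emph{other} three partitions $C'_{q'}$; I would cover them there by analogous filler triples among padding and gadget-auxiliary variables, again all arranged to be non-monochromatic for free. Counting must be made to work out: after all insertions every $C'_q$ covers the same vertex set $V'$, which is why the hypothesis $|V_2|=|V_3|=|V_4|$ is needed — it guarantees the three deficits are equal so a uniform padding scheme closes all of them simultaneously.

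Finally I would verify the three asserted properties. Polynomial-time constructibility is immediate since the number of added variables and gadgets is $O(t)$ and $t \le |V|$. For the equivalence: given a nae-satisfying $\beta$ for $\mathcal{C}'$, its restriction to $V$ nae-satisfies $\mathcal{C}$ since $\mathcal{C} \subseteq \mathcal{C}'$ up to the untouched clauses of $C_1$ and the original clauses of $C_2,C_3,C_4$; conversely, given a nae-satisfying $\beta$ for $\mathcal{C}$, extend it by setting the $T$-pool to $T$, the $F$-pool to $F$, and each gadget's nine auxiliary variables to the explicit pattern from the proof of Lemma~\ref{lem:eq-gadget-nae-sat}, then check the filler clauses are non-monochromatic by the split-padding design. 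For the disjointness addendum, note that fresh variables and fresh gadgets mean fresh clauses, the gadget clauses are mutually disjoint across the four partitions by inspection, and I would simply choose the filler triples for distinct $q$ to use disjoint clause-sets (easy, as they involve index-$q$-specific padding variables), so $C_i \cap C_j = \emptyset$ is preserved. The main obstacle I anticipate is the bookkeeping in the previous paragraph: making the padding pools large enough and partitioned finely enough that \emph{all four} new partitions $C'_1,\dots,C'_4$ simultaneously cover $V'$ with disjoint triples, every new triple is forced non-monochromatic, and no unintended linearity or collision is introduced — essentially a careful but routine combinatorial design, with the equal-deficit hypothesis doing the real work of making it possible.
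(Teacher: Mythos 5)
Your high-level plan (enlarge the variable set and pad $C_2,C_3,C_4$ into partitions using clauses that are ``free'' to nae-satisfy) is the right one, but the implementation you sketch has a genuine gap, and the part you defer as ``careful but routine bookkeeping'' is in fact the entire content of the lemma. The central problem is your use of the $\operatorname{EQ}$ gadget to build constant pools: that gadget does not itself decompose into four partitions that cover its argument variables uniformly. In the column decomposition $\operatorname{EQ}(x,1),\ldots,\operatorname{EQ}(x,4)$, each of $x_2,x_3,x_4$ is absent from exactly one column (this is precisely why Theorem~\ref{thm:positive} has to invoke Lemma~\ref{lem:disjoint-to-partition} \emph{after} inserting the $\operatorname{EQ}$ gadgets). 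So every $\operatorname{EQ}$ gadget you add to repair a deficit creates new deficits of the same kind among its own arguments, and your ``chain of gadgets, sharing arguments appropriately'' is never shown to close these; as stated the construction is circular. A second, related gap is the completion-clause arithmetic: each clause $\{v,p,p'\}$ consumes two padding variables, each of which must then appear exactly once in each of the \emph{other} three partitions as well; you assert that ``filler triples'' arranged to be non-monochromatic exist, but you neither construct them nor check divisibility, and a filler triple drawn from a single constant pool would be monochromatic.

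The paper's proof avoids all of this with a much lighter device and no forcing at all. It takes \emph{three} renamed copies of $\mathcal{C}$, and for each deficit index $t$ introduces six fresh auxiliaries $\{a_t,b_t,c_t,d_t,e_t,f_t\}$ and eleven clauses $\mathcal{S}(t,1),\ldots,\mathcal{S}(t,4)$ in which every clause contains at least one variable from $\{a_t,b_t,c_t\}$ and at least one from $\{d_t,e_t,f_t\}$. Hence the fixed assignment ($a_t,b_t,c_t\mapsto T$; $d_t,e_t,f_t\mapsto F$) nae-satisfies all padding clauses regardless of how the original variables are set --- no equality gadget and no constant-forcing is needed (and none is really available in NAE-SAT anyway, by complementation symmetry, as you half-note). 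The three copies are exactly what make the slot-counting close: for $q\ge 2$, $\mathcal{S}(t,q)$ has three clauses, i.e.\ nine slots, filled by the three copies of the $t$-th missing variable plus the six auxiliaries, while $\mathcal{S}(t,1)$ has two clauses filled by the six auxiliaries alone; thus every auxiliary appears exactly once in each $C'_q$ and every copy of a missing variable gains exactly the one appearance it lacked. If you want to salvage your version, you would need to replace the $\operatorname{EQ}$-based pools by some such explicitly balanced, always-nae-satisfiable padding and actually verify the covering and disjointness counts; as written, the proof is not complete.
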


\begin{proof}
First, we construct the following three copies of $\mathcal{C} = \bigcup_{q = 1}^4 C_q$:
\[
\mathcal{C}^{(r)} = \bigcup_{q = 1}^4 C_q^{(r)}, \quad r \in \{1, 2, 3\}. 
\]
Let $s$ be the positive integer with $s = |V_2| = |V_3| = |V_4|$ and let, for $r \in \{1,2,3\}$,
\begin{align*}
V_2^{(r)} = \{x_1^{(r)},x_2^{(r)}, \ldots, x_s^{(r)}\}, && V_3^{(r)} = \{y_1^{(r)},y_2^{(r)}, \ldots, y_s^{(r)}\}, && V_4^{(r)} = \{z_1^{(r)},z_2^{(r)}, \ldots, z_s^{(r)}\} 
\end{align*}
be the $r$th copy of $V_2$, $V_3$ and $V_4$, respectively. Now, for each $t \in \{1,\ldots, s\}$, we introduce a set of six auxiliary variables $V_\text{aux}^t := \{a_t, b_t, c_t, d_t, e_t, f_t\}$ and the following four sets $\mathcal{S}(t, q)$ of positive 3-clauses: 
\begin{align*}
\mathcal{S}(t, 1) &= \{
\{a_t,\, b_t,\, d_t\},\, \{c_t,\, e_t,\, f_t\}
\}, \\
\mathcal{S}(t, 2) &= 
\{
\{x_t^{(1)},\, a_t,\, d_t\},\, \{x_t^{(2)},\, b_t,\, e_t\},\, \{x_t^{(3)},\, c_t,\,f_t\}
\}, \\
\mathcal{S}(t, 3) &= 
\{
\{y_t^{(1)},\, a_t,\, e_t\},\, \{y_t^{(2)},\, b_t,\, f_t\},\, \{y_t^{(3)},\, c_t,\, d_t\}
\}, \\
\mathcal{S}(t, 4) &= 
\{
\{z_t^{(1)},\, a_t,\, f_t\},\, \{z_t^{(2)},\, b_t,\, d_t\},\, \{z_t^{(3)},\, c_t,\, e_t\}
\}.
\end{align*}   
We observe that setting each variable in~$\{a_t, b_t, c_t\}$ to $T$ and each variable in~$\{d_t, e_t, f_t\}$ to $F$ nae-satisfies these four sets for any choice of truth values for the other variables. Hence, we can introduce these four sets without affecting nae-satisfiability. We now define for~$q\in\{1, 2,3,4\}$
\begin{align*}
C'_q &:= \bigcup_{r = 1}^3 C_q^{(r)} \cup \bigcup_{t = 1}^s \mathcal{S}(t, q), \\
V' &:= \bigcup_{r = 1}^3 V^{(r)}  \cup \bigcup_{t = 1}^s V_\text{aux}^t,
\end{align*}
where $V^{(r)}$ denotes the $r$th copy of $V$. By construction, each $C'_q$ contains only positive $3$-clauses. We now show that each $C'_q$ is a partition of $V'$. 

First, in each set $\mathcal{S}(t, q)$ of positive 3-clauses, each auxiliary variable in $V_\text{aux}^t$ appears exactly once. Hence, each auxiliary variable in $\bigcup_{t = 1}^s V_\text{aux}^t$ appears in each $C'_q$. Further, note that by construction, the sets of auxiliary variables  for $\mathcal{S}(t_1, q)$ and $\mathcal{S}(t_2, q)$ are disjoint if $t_1 \neq t_2$. Thus, in each $C'_q$  each auxiliary variable appears exactly once. 

Second, let us consider $V^{(r)}$, $r \in \{1, 2, 3\}$. These  variable sets are copies of the original variable set $V$, i.e., contain copies of the variables corresponding to~$\mathcal{C}$. Since $C_1$ is a partition of $V$, each variable in $V^{(r)}$ appears exactly once in $C'_1$ (note that $\mathcal{S}(t, 1)$ has only auxiliary variables) and, thus, $C'_1$ is a partition of $V'$. Let us now consider $C'_q$ for a fixed $q \geq 2$. For each variable $v_t \in V_q$, we placed exactly one appearance of each copy $v_t^{(1)}, v_t^{(2)}, v_t^{(3)}$ in $\mathcal{S}(t, q)$. Since $\mathcal{S}(t, q) \subseteq C'_q$ and $\mathcal{S}(t_1, q) \cap \mathcal{S}(t_2, q) = \emptyset$, we conclude that $C'_q$ is a partition of $V'$. Now, suppose that $c_j \in (C'_{q_1} \cap C'_{q_2})$ where $q_1, q_2 \in \{1,2,3,4\}$ with $q_1 \neq q_2$. Clearly, $c_j \not\in \bigcup_{t = 1}^s (\mathcal{S}(t, q_1) \cup \mathcal{S}(t, q_2))$ and, thus, $c_j \in (C_{q_1}^{(r)} \cap C_{q_2}^{(r)})$ with $r \in \{1,2,3\}$. By construction, we then have $ |C_{q_1} \cap C_{q_2}| \geq 1$. By contraposition, we conclude that if $C_{q_1} \cap C_{q_2} = \emptyset$ for $1 \leq {q_1} < {q_2} \leq 4$, then $C'_{q_1} \cap C'_{q_2}= \emptyset$ for $1 \leq {q_1} < {q_2} \leq 4$.\\

We show that $\mathcal{C}' = \bigcup_{q = 1}^4 C'_q$ is nae-satisfiable if and only if $\mathcal{C}$ is nae-satisfiable.

``$\Rightarrow$'' Let $\beta\colon V' \rightarrow \{T,F\}$ be a truth assignment that nae-satisfies $\mathcal{C}'$. Then, $\beta$ nae-satisfies the clauses $\bigcup_{q = 1}^4 \mathcal{C}_q^{(1)}$ over the set of variables $V^{(1)}$. Hence, $\beta$ nae-satisfies a copy of $\mathcal{C}$ with renamed variables. Thus, $\mathcal{C}$ is nae-satisfiable. 

``$\Leftarrow$'' Let $\beta\colon V \rightarrow \{T,F\}$ be a truth assignment that nae-satisfies $\mathcal{C}$. We nae-satisfy $\mathcal{C}_q^{(r)}$ for each $r \in \{1, 2, 3\}$ by setting all copies of a variable to $T$ if and only if $\beta$ sets the original variable to $T$. The remaining clauses in $\mathcal{C}'$ stem from the sets $\mathcal{S}(t, q)$ which always allow a nae-satisfying truth assignment by setting the auxiliary variables as discussed further above. Thus, $\mathcal{C}'$ is nae-satisfiable.

We conclude the proof by remarking that the transformation is polynomial.\end{proof}

\begin{thm}\label{thm:positive}
\textsc{Positive 4-Disjoint NAE-$3$-Sat-E$4$} is \npc.
\end{thm}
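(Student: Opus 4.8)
The plan is to establish membership in \np\ (immediate, since a nae-satisfying truth assignment is a polynomial-size certificate that can be checked in polynomial time) and then \nphn\ by a polynomial-time reduction from \textsc{Positive NAE-$3$-Sat-E$4$}, which is \npc\ by Darmann and D\"ocker~\cite{DarDoe2020} (they prove the stronger statement in which the clause set is additionally required to be linear). The reduction will be designed so that its output already satisfies the hypotheses of Lemma~\ref{lem:disjoint-to-partition}; one application of that lemma then turns it into a genuine instance of \textsc{Positive 4-Disjoint NAE-$3$-Sat-E$4$}. A convenient feature of the E$4$ source is that every such instance automatically has $3 \mid n$ (since $3m = 4n$), so no divisibility patching is required.

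Given an instance $\phi$ with variable set $\{v_1,\dots,v_n\}$, each $v_i$ occurring in exactly four clauses, and clause collection $\mathcal{D}$ with $|\mathcal{D}| = 4n/3$, I would proceed as follows. For each $v_i$ introduce four fresh \emph{port} variables $v_i^1,\dots,v_i^4$, one per occurrence of $v_i$, together with one fresh copy $\operatorname{EQ}(v_i^1,v_i^2,v_i^3,v_i^4)$ of the gadget from Lemma~\ref{lem:eq-gadget-nae-sat} (each copy carrying its own nine fresh auxiliary variables). Let $V$ be the union of all ports and all $9n$ auxiliary variables, and let $\mathcal{D}'$ be obtained from $\mathcal{D}$ by replacing, in every clause, each variable-occurrence by the corresponding port. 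Now put $\mathcal{D}'$ together with clauses $13,14,15$ of every gadget copy into $C_1$; clauses $1$--$4$ of every copy into $C_2$; clauses $5$--$8$ into $C_3$; and clauses $9$--$12$ into $C_4$.

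I would then verify that $C_1,C_2,C_3,C_4$ meet the requirements of Lemma~\ref{lem:disjoint-to-partition}. Since $v_i$ occurs at most once in each clause of $\phi$, every port lies in exactly one clause of $\mathcal{D}'$, so $\mathcal{D}'$ is a partition of the port set; clauses $13,14,15$ of each gadget partition its auxiliary variables; hence $C_1$ is a partition of $V$. Within a single gadget, clauses $1$--$4$ (likewise $5$--$8$ and $9$--$12$) are pairwise disjoint, and distinct gadgets use disjoint variable sets, so each of $C_2,C_3,C_4$ is a collection of pairwise disjoint triples. Reading off the occurrence pattern of the gadget shows that $C_2$ omits precisely the ports in the $x_3$-slots, $C_3$ omits precisely the $x_2$-ports, and $C_4$ omits precisely the $x_4$-ports, so $|V_2| = |V_3| = |V_4| = n$. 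Finally, renamed clauses consist only of ports, gadget clauses $13$--$15$ only of auxiliary variables, the remaining gadget clauses are mutually distinct, and gadgets use fresh variables, so $C_i \cap C_j = \emptyset$ for $i \neq j$. Equisatisfiability with $\phi$ is routine: a nae-satisfying assignment $\beta$ of $\phi$ lifts by setting every port $v_i^\ell$ to $\beta(v_i)$ and completing each gadget via the ``if'' direction of Lemma~\ref{lem:eq-gadget-nae-sat}; conversely any nae-satisfying assignment of $\mathcal{C} = C_1 \cup C_2 \cup C_3 \cup C_4$ gives, by the ``only if'' direction of Lemma~\ref{lem:eq-gadget-nae-sat}, a common value to $v_i^1,\dots,v_i^4$ for each $i$, which we declare to be $\beta(v_i)$, and then each clause of $\mathcal{D}$ is nae-satisfied because its renamed copy in $\mathcal{D}' \subseteq C_1$ is. Applying Lemma~\ref{lem:disjoint-to-partition} to $C_1,\dots,C_4$ then yields an equisatisfiable instance in which all four collections are partitions of the new variable set and remain pairwise disjoint --- that is, an instance of \textsc{Positive 4-Disjoint NAE-$3$-Sat-E$4$} --- and the construction is clearly polynomial.

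I expect no conceptual obstacle once Lemmas~\ref{lem:eq-gadget-nae-sat} and~\ref{lem:disjoint-to-partition} are available; the only real work is the bookkeeping above, and it hinges on the deliberately asymmetric occurrence structure of $\operatorname{EQ}(x_1,x_2,x_3,x_4)$: $x_1$ occurs three times and appears in all of $C_2,C_3,C_4$, whereas each of $x_2,x_3,x_4$ occurs twice and is absent from exactly one of these three groups, which is precisely what makes the three uncovered sets $V_2,V_3,V_4$ equinumerous so that Lemma~\ref{lem:disjoint-to-partition} can be invoked. The points that need care are therefore the disjointness verifications and confirming that $C_1$ really is a partition of $V$ (and not merely a collection of pairwise disjoint triples).
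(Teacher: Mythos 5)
Your proposal is correct and follows essentially the same route as the paper: a reduction from \textsc{Positive NAE-$3$-Sat-E$4$} using one port variable per occurrence, one copy of the $\operatorname{EQ}$ gadget per original variable, the column-wise split of the gadget's fifteen clauses into four groups (with the renamed clauses and the three all-auxiliary clauses forming the partition $C_1$), and a final application of Lemma~\ref{lem:disjoint-to-partition} justified by $|V_2|=|V_3|=|V_4|$. The only difference is the immaterial relabelling of which column is called $C_2$, $C_3$, or $C_4$.
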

\begin{proof}
We provide a reduction from the \npc\ problem \textsc{Positive NAE-$3$-Sat-E$4$} (see Darmann and D\"ocker~\cite{DarDoe2020}). The proof proceeds in two steps. In the first step, given an instance $\mathcal{I} $ of \textsc{Positive NAE-$3$-Sat-E$4$}, we derive another instance  $\mathcal{I}'$ of  \textsc{Positive 4-Disjoint NAE-$3$-Sat} which is a yes-instance if and only if $\mathcal{I}$ is a yes-instance. In the second step, from  $\mathcal{I}'$ we derive an instance $\mathcal{I}''$ of \textsc{Positive 4-Disjoint NAE-$3$-Sat-E$4$} which is a yes-instance if and only if $\mathcal{I}'$---and thus instance $\mathcal{I}$---is a yes-instance.

Given an instance  $\mathcal{I} = (V, \mathcal{C})$ of \textsc{Positive NAE-$3$-Sat-E$4$}, for each variable $x \in V$ we introduce a set of four new variables $\{x_1, x_2, x_3, x_4\}$---one for each appearance---and replace the $i$th appearance of $x$ in the collection of clauses $\mathcal{C} = \{c_j \mid 1 \leq j\leq m\}$ by $x_i$ (the order of the appearances of a variable $x$ is induced by the indices of the clauses $c_j$ that contain $x$). 
Let $\mathcal{C}'_{\mathcal{I}}$ denote the resulting collection of clauses. 
Now, in order to obtain  instance $\mathcal{I}' = (V', \mathcal{C}')$,  we define  
\[
\mathcal{C}' := \mathcal{C}'_{\mathcal{I}} \cup \bigcup_{x \in V} \operatorname{EQ}(x_1, x_2, x_3, x_4),
\]
where $\operatorname{EQ}(x_1, x_2, x_3, x_4)$ is the gadget defined in Lemma~\ref{lem:eq-gadget-nae-sat} and each gadget has its own set of auxiliary variables denoted by $V_\text{aux}^x$. The set of variables in instance $\mathcal{I}'$ is
\[
V' = \bigcup_{x \in V} \left ( \{x_1, x_2, x_3, x_4\} \cup V_\text{aux}^x \right ). 
\]
By Lemma~\ref{lem:eq-gadget-nae-sat} it follows that $\mathcal{I}$ is a yes-instance if and only if $\mathcal{I}'$ is a yes-instance.

Next, we show that the clauses in $\mathcal{C}'$ can be decomposed into four sets $C'_1$, $C'_2$, $C'_3$ and $C'_4$ such that each set consists of pairwise disjoint clauses. This is the reason why we presented the clauses of the gadget~$\operatorname{EQ}(x_1, x_2, x_3, x_4)$ in four columns. Note that each column contains exactly one appearance of each auxiliary variable in $V_\text{aux}$ and at most one appearance of each element in $X = \{x_1, x_2, x_3, x_4\}$. Further, for $q \in \{2, 3, 4\}$, there is exactly one column in which~$x_q$ does not appear but all variables in~$X \setminus \{x_q\}$ do. We denote this column by $\operatorname{EQ}(x, q)$. The last column, which does not contain any variable in $X$, is denoted by $\operatorname{EQ}(x, 1)$. Hence, we partition the clauses in~$\operatorname{EQ}(x_1, x_2, x_3, x_4)$ as follows:
\begin{align*}
\operatorname{EQ}(x, 1) &= \{
\{a,\, b,\, c\},\, \{d,\, e,\, f\},\, \{g,\, h,\, i\}
\}, \\
\operatorname{EQ}(x, 2) &= 
\{
\{a,\, g,\, x_4\},\, \{b,\, e,\, x_3\},\, \{d,\, i,\, x_1\},\, \{c, f, h \}
\}, \\
\operatorname{EQ}(x, 3) &= \{
\{a,\, h,\, x_2\},\, \{b,\, d,\, x_4\},\, \{c,\, e,\, i \},\, \{f,\, g,\, x_1\}
\}, \\
\operatorname{EQ}(x, 4) &= 
\{
\{a,\, f,\, x_3\},\, \{b,\, i,\, x_2\},\, \{e,\, h,\, x_1\},\, \{c,\, d,\, g \}
\}.
\end{align*}    
Now, we can define the decomposition $\mathcal{C}' = \bigcup_{q=1}^4 C'_q$ as follows:
\[
  C'_q = 
  \begin{cases}
    \mathcal{C}'_{\mathcal{I}} \cup \bigcup_{x \in V} \operatorname{EQ}(x, 1), & \text{for } q = 1 \\
    \bigcup_{x \in V} \operatorname{EQ}(x, q), & \text{for } q \in \{2, 3, 4\}
  \end{cases}
\]
Note that $C'_1$ is a partition of $V'$. For a fixed $q \geq 2$, $C'_q$ consists of pairwise disjoint clauses but it is not a partition of $V'$ as $C'_q$ does not contain an appearance of any variable in $V'_q = \bigcup_{x \in V} \{x_q\}$. In fact, $V'_q$ is precisely the subset of variables $V'$ that have no appearance in $C'_q$. Since $|V'_2| = |V'_3| = |V'_4|$, the conditions in Lemma~\ref{lem:disjoint-to-partition} are satisfied and we obtain an instance $\mathcal{I}'' = (V'', \mathcal{C}'')$ with a decomposition into four sets that are partitions of $V''$ in polynomial time such that $\mathcal{I}''$ is a yes-instance if and only if $\mathcal{I}'$ is a yes-instance. Further, since $\mathcal{C}'_{\mathcal{I}} \subseteq C'_1$ contains all positive 3-clauses that are subsets of $\{x_1, x_2, x_3, x_4 \mid x \in V\}$ and we did not duplicate a clause when partitioning the gadget~$\operatorname{EQ}(x_1, x_2, x_3, x_4)$, it now follows that $C'_{q_1} \cap C'_{q_2} = \emptyset$ for $1 \leq q_1 < q_2 \leq 4$. Then, by Lemma~\ref{lem:disjoint-to-partition}, the constructed decomposition of $\mathcal{C}''$ yields four pairwise disjoint partitions (that is, no clause appears in more than one partition). Hence, each variable appears exactly four times in $\mathcal{C}''$ and $\mathcal{I}''$ is an instance of \textsc{Positive 4-Disjoint NAE-$3$-Sat-E$4$}.

We conclude the proof by remarking that both transformations are polynomial and, thus, the construction of $\mathcal{I}''$ from $\mathcal{I}$ such that $\mathcal{I}$ is a yes-instance if and only if $\mathcal{I}''$ is a yes-instance can be done in polynomial time.
\end{proof}

\subsection{Hardness remains when linearity constraint is added}\label{sub:linearity}

Next, we show that \npcn\ of \textsc{Positive 4-Disjoint NAE-$3$-Sat-E$4$} also holds even when restricted to instances in which the corresponding collection of clauses is linear. For that sake, we state the following lemma.

\begin{lem}\label{lem:eq-gadget-linear-nae-sat}
Let $\operatorname{EQ_\text{lin}}(x_1, x_2, x_3, x_4)$ bet the following set of clauses, where $V_\text{aux} = \{a, b, c, d, e,  f\}$ are new variables.  

\begin{multicols}{4} 
\begin{enumerate}
\item $\{a, d, x_3\}$
\item $\{b, e, x_2\}$
\item $\{c, f, x_4\}$

\item $\{a, c, x_1\}$
\item $\{b, d, x_4\}$
\item $\{e, f, x_3\}$

\item $\{a, e, x_4\}$
\item $\{b, f, x_1\}$ 
\item $\{c, d, x_2\}$

\item $\{a, f, x_2\}$
\item $\{b, c, x_3\}$
\item $\{d, e, x_1\}$
\end{enumerate}
\end{multicols}

Then, a truth assignment $\beta$ for $X := \{x_1, x_2, x_3, x_4\}$ can be extended to a truth assignment $\beta'$ for $X \cup V_{\text{aux}}$ that nae-satisfies all clauses in $\operatorname{EQ_\text{lin}}(x_1, x_2, x_3, x_4)$ if and only if $\beta$ assigns the same truth value to all variables in $X$. 
\end{lem}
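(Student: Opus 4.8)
The plan is to mirror the structure of the proof of Lemma~\ref{lem:eq-gadget-nae-sat}, since the new gadget $\operatorname{EQ_\text{lin}}(x_1,x_2,x_3,x_4)$ plays exactly the same role but with only six auxiliary variables and an evident linearity property (each pair of clauses sharing at most one variable, which one checks once at the outset). First I would dispose of the easy direction: if $\beta$ assigns the same truth value to all of $x_1,x_2,x_3,x_4$, then by the symmetry of nae-assignments we may assume $\beta(x_i)=T$ for all $i$, and I would exhibit an explicit extension to $V_\text{aux}$ — a natural candidate is to set some variables of $\{a,b,c\}$ to $T$ and the rest of $V_\text{aux}$ to $F$ (or a similar split), then verify that every one of the twelve clauses contains both a true and a false literal. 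The right split can be found by inspection; the verification is routine and I would present it in one line.

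For the harder direction I would show that any truth assignment $\beta'$ on $X\cup V_\text{aux}$ that nae-satisfies all twelve clauses must have $\beta'(x_1)=\beta'(x_2)=\beta'(x_3)=\beta'(x_4)$. The idea is to pick one clause — there is no ``$\{a,b,c\}$''-style clause here, so I would instead use a convenient clause, say clause~4 $\{a,c,x_1\}$ or clause~1 $\{a,d,x_3\}$, for the top-level case split, again invoking nae-symmetry to reduce to roughly three cases (one variable true, the other two false, up to the global flip). In each case I would read off the learned $2$-clauses coming from Case~(ii)/(iii) of the proof-techniques section: a clause $\{u,v,w\}$ with $w$ set false forces the $2$-clause $\{u,v\}$ (must be satisfied), and with $w$ set true forces $\{\bar u,\bar v\}$. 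Collecting these $2$-clauses and chaining them as implications $\ell_1\Rightarrow\ell_2\Rightarrow\cdots$, possibly after one round of resolution, I expect to obtain in each case either a cyclic chain forcing the three variables of one of the clauses to a common value (contradicting nae-satisfaction of that clause) unless the $x_i$ are forced equal, or a direct deduction that the remaining $x_i$ must take the same value as the distinguished one. This is the same mechanism as in Lemma~\ref{lem:eq-gadget-nae-sat}, just with a smaller variable set, so the chains should be shorter.

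The main obstacle is purely bookkeeping: with six auxiliary variables and twelve clauses arranged so that every variable in $V_\text{aux}$ appears in exactly four clauses and each $x_i$ in exactly three, I must make sure the case analysis is genuinely exhaustive and that in every branch the implication chains actually close up — i.e.\ that the gadget has no ``leak'' allowing the $x_i$ to differ. I would organize the write-up exactly as in the previous lemma: underline each case, display the collection $\mathcal{C}_\textsc{Sat}$ of learned $2$-clauses, rewrite the relevant ones as implication chains, and then either extract a cyclic chain hitting a clause of the gadget or force the outstanding $x_i$'s. A secondary point to be careful about is that, because the gadget is smaller, some case may need a second wave of learned clauses (as in the ``$\mathcal{C}'_\textsc{Sat}$'' step of Lemma~\ref{lem:eq-gadget-nae-sat}) after an intermediate variable gets pinned down; I would build that in rather than hope it is unnecessary. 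Finally I would note that the four columns in the displayed list of clauses already exhibit a decomposition into four sets of pairwise disjoint clauses, with $x_q$ absent from exactly one column for $q\in\{2,3,4\}$ and one column free of all $x_i$ — the property that makes this gadget a drop-in replacement in the reduction of Section~\ref{sub:linearity}, so the lemma statement about nae-satisfiability is all that the proof needs to establish.
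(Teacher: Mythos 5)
Your toolkit is the right one (learned $2$-clauses, implication chains, resolution, nae-symmetry), but the proposal stops exactly where the lemma's actual content begins: the hard direction is left at the level of ``I expect to obtain in each case either a cyclic chain \dots or a direct deduction,'' and you yourself flag that you have not checked that ``the gadget has no leak.'' Since the whole point of the lemma is that this \emph{particular} set of twelve clauses forces $\beta(x_1)=\beta(x_2)=\beta(x_3)=\beta(x_4)$, that verification cannot be deferred; as written there is no evidence the gadget works. For the record, the paper organizes the hard direction differently from your plan: rather than a top-level case split on a single clause (which here is awkward, because every clause contains one of the $x_i$, so the split entangles the auxiliary variables with the very quantities you are trying to control), it proves the three statements $\beta(x_1)=\beta(x_j)$ for $j\in\{2,3,4\}$ separately. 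In each one it assumes $\beta(x_1)=T$, $\beta(x_j)=F$, collects the six $2$-clauses learned from the six clauses containing $x_1$ or $x_j$, and shows these chain into cyclic implications that pin the auxiliary variables into one or two constant blocks; two of the remaining clauses then cannot both be nae-satisfied. You would need to carry out an analysis of this kind explicitly in every branch.

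Two smaller points. In the easy direction, the witness is simply $\beta'(v)=F$ for \emph{all} $v\in V_\text{aux}$: every clause contains exactly one $x_i$ and two auxiliary variables, so with all $x_i$ true each clause has one true and two false literals. Your proposed ``set some of $\{a,b,c\}$ to $T$'' is modeled on the other gadget and is not safe here --- e.g.\ $a=c=T$ makes clause~4, $\{a,c,x_1\}$, all true. Finally, your closing remark about the four columns being a decomposition into disjoint partitions with $x_q$ absent from one column is a property of the \emph{other} gadget ($\operatorname{EQ}$); in $\operatorname{EQ_\text{lin}}$ every column contains all four of $x_1,\dots,x_4$ missing exactly one, and the decomposition used in Theorem~\ref{thm:pos-lin-E4} is the one displayed there, not the one you describe. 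This does not affect the lemma itself, but it would matter if you relied on it downstream.
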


\begin{proof}

First, consider the case that $\beta$ assigns the same truth value to all variables in $X$. By symmetry of nae-assignments, we may assume that $\beta(x_i) = T$ for each $x_i \in X$. Now, assigning the truth value $F$ to all variables in $V_{\text{aux}}$ nae-satisfies all clauses in $\operatorname{EQ_\text{lin}}(x_1, x_2, x_3, x_4)$. 

We proceed by showing that $\beta(x_1) = \beta(x_j)$ for $j \in \{2,3,4\}$ is necessary to obtain a truth assignment that nae-satisfies all clauses in $\operatorname{EQ_\text{lin}}(x_1, x_2, x_3, x_4)$.
\begin{itemize}
\item $\beta(x_1) = \beta(x_2)$: Assume towards a contradiction that $\beta(x_1) \neq \beta(x_2)$ and all clauses in $\operatorname{EQ_\text{lin}}(x_1, x_2, x_3, x_4)$ can be nae-satisfied. By symmetry, we may assume that $\beta(x_1) = T$ and $\beta(x_2) = F$. Then, by the clauses containing either $x_1$ or $x_2$, we have to satisfy (but not necessarily nae-satisfy)  the clauses 
 
\[
\{\overline{b}, \overline{f}\}, \{c, d\}, \{b, e\}, \{\overline{a}, \overline{c}\}, \{a, f\}, \{\overline{d}, \overline{e}\}
\]
which yields, by performing resolution, the following clauses:
\[
\{\overline{b}, a\}, \{\overline{a}, d\}, \{\overline{d}, b\}, \{\overline{c}, f\}, \{\overline{f}, e\}, \{\overline{e}, c\}.
\]

Note that the latter 2-clauses form two cyclic chains of implications that can only be satisfied if $\beta'(a) = \beta'(b) = \beta'(d)$ and $\beta'(c) = \beta'(e) = \beta'(f)$. Let   
\[
\beta'(a) = \beta'(b) = \beta'(d) = b_1 \text{ and } \beta'(c) = \beta'(e) = \beta'(f) = b_2,
\]
with $b_1, b_2 \in \{T, F\}$. By the clauses 1 and 6, i.e., $\{a, d, x_3\}$ and $\{e, f, x_3\}$, we have to set $x_3$ to $\overline{b_1}$ and to $\overline{b_2}$, respectively\footnote{Formally, we set $\overline{T}=F$ and $\overline{F}=T$. }. Hence, we conclude $b_1 = b_2$, i.e., all variables in $V_\text{aux}$ have the same truth value under $\beta'$.  Since $x_1$ and $x_2$ have different truth values, in one of the clauses 2 and 4 all  variables have the same truth value no matter how we choose $b_1$. This contradicts our assumption that $\operatorname{EQ_\text{lin}}(x_1, x_2, x_3, x_4)$ can be nae-satisfied. 
\item $\beta(x_1) = \beta(x_3)$: Assume towards a contradiction that $T = \beta(x_1) \neq \beta(x_3) = F$ and all clauses in $\operatorname{EQ_\text{lin}}(x_1, x_2, x_3, x_4)$ can be nae-satisfied.
Then, by the clauses containing either $x_1$ or $x_3$, we have to satisfy the clauses
\[
\{\overline{b}, \overline{f}\}, \{a, d\}, \{e, f\}, \{\overline{a}, \overline{c}\}, \{b, c\}, \{\overline{d}, \overline{e}\}
\]
 
which is equivalent to the following cyclic chain of implications:
\[
b \Rightarrow \overline{f} \Rightarrow e \Rightarrow \overline{d} \Rightarrow a \Rightarrow \overline{c} \Rightarrow b.
\]
Hence, it must hold that  $\beta'(a) = \beta'(e) \neq \beta'(c) = \beta'(f)$. But then clauses 3 and 7 cannot be simultaneously nae-satisfied  which contradicts our assumption.     
\item $\beta(x_1) = \beta(x_4)$: Assuming $T = \beta(x_1) \neq \beta(x_4) = F$, we have to satisfy the clauses
\[
\{\overline{b}, \overline{f}\}, \{c, f\}, \{\overline{a}, \overline{c}\}, \{a, e\}, \{\overline{d}, \overline{e}\},  \{b, d\} 
\]
which is equivalent to the following cyclic chain of implications:
\[
b \Rightarrow \overline{f} \Rightarrow c \Rightarrow \overline{a} \Rightarrow e \Rightarrow \overline{d} \Rightarrow b.
\]
Thus, we get $\beta'(b) = \beta'(e) \neq \beta'(a) = \beta'(f)$. But then clauses 2 and 10 cannot be simultaneously nae-satisfied  which contradicts our assumption. 

\end{itemize}

We conclude the proof by summarizing that the condition $\beta(x_1) = \beta(x_j)$ for all $j \in \{2,3,4\}$ is indeed necessary to obtain a truth assignment that nae-satisfies all clauses in $\operatorname{EQ_\text{lin}}(x_1, x_2, x_3, x_4)$.
\end{proof}

\begin{thm}\label{thm:pos-lin-E4}
\textsc{Positive Linear 4-Disjoint NAE-$3$-Sat-E$4$} is \npc.
\end{thm}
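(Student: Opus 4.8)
The plan is to mirror the structure of the proof of Theorem~\ref{thm:positive}, but now starting from the already-linear source problem and using the linear equality gadget $\operatorname{EQ_\text{lin}}$ from Lemma~\ref{lem:eq-gadget-linear-nae-sat} in place of $\operatorname{EQ}$. Concretely, I would reduce from \textsc{Positive Linear NAE-$3$-Sat-E$4$}, which is \npc\ by Darmann and D\"ocker~\cite{DarDoe2020} (their hard instances are already linear with each variable appearing exactly four times). Given such an instance $\mathcal{I}=(V,\mathcal{C})$, I split each variable $x$ into four private copies $x_1,x_2,x_3,x_4$, replace the $i$th occurrence of $x$ by $x_i$, and add one gadget $\operatorname{EQ_\text{lin}}(x_1,x_2,x_3,x_4)$ per variable, each with its own private auxiliary set $V_\text{aux}^x=\{a^x,b^x,c^x,d^x,e^x,f^x\}$. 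By Lemma~\ref{lem:eq-gadget-linear-nae-sat}, a nae-assignment of the new formula forces $\beta(x_1)=\beta(x_2)=\beta(x_3)=\beta(x_4)$ for every $x$, and conversely any nae-assignment of $\mathcal{I}$ lifts; so the new instance $\mathcal{I}'$ is a yes-instance iff $\mathcal{I}$ is.

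Next I would exhibit the four-column decomposition of each gadget exactly as displayed in Lemma~\ref{lem:eq-gadget-linear-nae-sat}: the four columns each contain every auxiliary variable exactly once, and for $q\in\{2,3,4\}$ there is a unique column missing $x_q$ but containing the other three $X$-variables (call it $\operatorname{EQ_\text{lin}}(x,q)$), while the fourth column contains no $X$-variable (call it $\operatorname{EQ_\text{lin}}(x,1)$). Setting $C'_1:=\mathcal{C}'_{\mathcal{I}}\cup\bigcup_{x\in V}\operatorname{EQ_\text{lin}}(x,1)$ and $C'_q:=\bigcup_{x\in V}\operatorname{EQ_\text{lin}}(x,q)$ for $q\in\{2,3,4\}$ makes $C'_1$ a partition of $V'$ and each $C'_q$ ($q\ge2$) a set of pairwise disjoint clauses missing exactly $V'_q=\bigcup_x\{x_q\}$, with $|V'_2|=|V'_3|=|V'_4|=|V|$. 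Then I invoke Lemma~\ref{lem:disjoint-to-partition} to turn this into an instance $\mathcal{I}''$ in which each $C''_q$ is a genuine partition, preserving nae-satisfiability and (since $\mathcal{C}'_{\mathcal{I}}$ already holds all $3$-clauses on the $x_i$'s and no gadget clause is duplicated across columns) pairwise disjointness of the four partitions, so each variable appears exactly four times.

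The one genuinely new thing to verify — and the main obstacle — is linearity of $\mathcal{I}''$, since Lemma~\ref{lem:disjoint-to-partition} does not assert it. I would first check that $\mathcal{C}'$ is linear: within a single gadget $\operatorname{EQ_\text{lin}}(x_1,x_2,x_3,x_4)$ one verifies by inspection of the twelve listed clauses that no two share two variables (this is precisely why the authors took the trouble to use a $6$-auxiliary-variable linear gadget rather than the $9$-auxiliary-variable $\operatorname{EQ}$); gadgets for distinct $x\neq x'$ share no variable at all since copies and auxiliaries are private; and a gadget clause shares at most the single variable $x_i$ with any clause of $\mathcal{C}'_{\mathcal{I}}$, while two clauses of $\mathcal{C}'_{\mathcal{I}}$ inherit linearity from $\mathcal{C}$ because distinct occurrences use distinct copies $x_i$. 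Then I would observe that the transformation of Lemma~\ref{lem:disjoint-to-partition} preserves linearity: it takes three variable-disjoint copies $\mathcal{C}^{(1)},\mathcal{C}^{(2)},\mathcal{C}^{(3)}$ of the (linear) formula — so no two clauses from different copies overlap — and adds, for each $t$, the fresh gadget-like sets $\mathcal{S}(t,q)$, each on six private auxiliaries $\{a_t,\dots,f_t\}$ plus one copied variable per clause; a quick check of those eleven clauses across the four $\mathcal{S}(t,q)$ shows every pair meets in at most one variable, and sets for $t_1\neq t_2$ are variable-disjoint except for the copied variables $x_t^{(r)},y_t^{(r)},z_t^{(r)}$, which are distinct for distinct $t$. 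Hence $\mathcal{C}''$ is linear. Assembling these observations, $\mathcal{I}''$ is a valid instance of \textsc{Positive Linear 4-Disjoint NAE-$3$-Sat-E$4$}, the whole construction is polynomial, and membership in \np\ is clear, so the problem is \npc.
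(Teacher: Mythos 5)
There is a genuine gap, and it comes in two pieces. First, you have misread the structure of $\operatorname{EQ_\text{lin}}$: unlike the $15$-clause gadget $\operatorname{EQ}$, the $12$-clause gadget $\operatorname{EQ_\text{lin}}$ has \emph{no} column free of $X$-variables --- each of its four columns contains exactly three of $x_1,x_2,x_3,x_4$ (column $q$ is missing precisely $x_q$, including $q=1$). Consequently your set $C'_1:=\mathcal{C}'_{\mathcal{I}}\cup\bigcup_{x\in V}\operatorname{EQ_\text{lin}}(x,1)$ is not a set of pairwise disjoint clauses: $x_2$ already occurs once in $\mathcal{C}'_{\mathcal{I}}$ and again in the column-$1$ clause $\{b,e,x_2\}$. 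The obstruction is structural, not cosmetic: since $x_j$ occurs in the three gadget columns other than column $j$, the unique formula clause containing $x_j$ \emph{must} be placed in set $j$; but a clause $\{x_i,y_j,z_k\}\in\mathcal{C}'_{\mathcal{I}}$ would then have to lie simultaneously in sets $i$, $j$ and $k$, which is impossible unless $i=j=k$. Starting from a merely linear (not $4$-disjoint) source instance, the occurrence indices are dictated by the clause ordering and will not all agree, so no $4$-set decomposition exists for your $\mathcal{C}'$. This is exactly why the paper instead reduces from \textsc{Positive 4-Disjoint NAE-$3$-Sat-E$4$} (Theorem~\ref{thm:positive}): renaming all variables of the partition $C_i$ with index $i$ forces every formula clause into the form $\{x_i,y_i,z_i\}$, so $C^{(i)}$ supplies exactly the one missing occurrence of each $x_i$ to column $i$, and the four sets $C'_i=C^{(i)}\cup\bigcup_x\operatorname{EQ_\text{lin}}(x,i)$ are genuine pairwise disjoint partitions of $V'$ with no padding step at all.

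Second, your claim that Lemma~\ref{lem:disjoint-to-partition} preserves linearity is false, so even granting a valid $4$-set decomposition the invocation of that lemma would destroy the property you are trying to establish. The padding sets $\mathcal{S}(t,q)$ are not linear: $\{a_t,b_t,d_t\}\in\mathcal{S}(t,1)$ and $\{x_t^{(1)},a_t,d_t\}\in\mathcal{S}(t,2)$ share the two variables $a_t$ and $d_t$ (likewise $\{a_t,b_t,d_t\}$ and $\{z_t^{(2)},b_t,d_t\}$ share $b_t,d_t$). The lemma is stated and used only for the non-linear Theorem~\ref{thm:positive}; the correct proof of the linear case avoids it entirely by the device described above. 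The remaining ingredients of your proposal (the use of $\operatorname{EQ_\text{lin}}$ via Lemma~\ref{lem:eq-gadget-linear-nae-sat}, the one-variable-per-gadget-clause argument for linearity across the formula/gadget interface, and the equivalence of nae-satisfiability) are sound and coincide with the paper's.
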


\begin{proof}
We show \textsc{NP}-hardness of \textsc{Positive Linear 4-Disjoint NAE-$3$-Sat-E$4$} by a polynomial reduction from its version without the linearity constraint (which is \npc\ by Theorem~\ref{thm:positive}).

Let $\mathcal{I} = (V, \mathcal{C})$ be an instance of \textsc{Positive $4$-Disjoint NAE-$3$-Sat-E$4$} with $\mathcal{C} = \bigcup_{i = 1}^4 C_i$ such that $C_i$ is a partition of $V$ into subsets of size 3. We define
\[
V^{(i)} := \{x_i \mid x \in V\}, \quad i \in \{1,2,3,4\},
\]
which are copies of the variable set $V$ obtained by renaming variables, and 
\[
C^{(i)} := \{\{x_i, y_i, z_i\} \mid \{x, y, z\} \in C_i \}, \quad i \in \{1,2,3,4\},
\]
which are again obtained from their counterparts $C_1, C_2, C_3$ and $C_4$, respectively, by renaming variables. Note that the clauses in $C^{(i)}$ are formed over the set of variables $V^{(i)}$, i.e., all variables that appear in $C^{(i)}$ have the same index~$i$.
 
In order to derive an instance $\mathcal{I}' = (V', \mathcal{C}')$ of \textsc{Positive Linear 4-Disjoint NAE-$3$-Sat-E$4$}, we define  
\[
\mathcal{C}' := \bigcup_{i = 1}^4 C^{(i)} \cup \bigcup_{x \in V} \operatorname{EQ_\text{lin}}(x_1, x_2, x_3, x_4),
\]
where $\operatorname{EQ_\text{lin}}(x_1, x_2, x_3, x_4)$ is the gadget defined in Lemma~\ref{lem:eq-gadget-linear-nae-sat} and each gadget has its own set of auxiliary variables denoted by $V_\text{aux}^x$. The set of variables in instance $\mathcal{I}'$ is
\[
V' = \bigcup_{x \in V} \left ( \{x_1, x_2, x_3, x_4\} \cup V_\text{aux}^x \right ). 
\]  

We now show that  $\mathcal{I}' = (V', \mathcal{C}')$ is indeed an instance of \textsc{Positive Linear 4-Disjoint NAE-$3$-Sat-E$4$}, i.e., that the clause set $\mathcal{C}'$ is (1) linear and (2) consists of four pairwise disjoint partitions of $V'$. First, let us show (1). The clauses in $\bigcup_{i = 1}^4 C^{(i)}$ are pairwise disjoint and formed over the variables in $\bigcup_{x \in V} \{x_1, x_2, x_3, x_4\}$. Further, each clause in $\bigcup_{x \in V}\operatorname{EQ_\text{lin}}(x_1, x_2, x_3, x_4)$ contains exactly one variable of the set $\bigcup_{x \in V} \{x_1, x_2, x_3, x_4\}$. Hence, each clause in $\bigcup_{i = 1}^4 C^{(i)}$ has at most one variable in common with any other clause from $\mathcal{C}'$. There are no variables shared between different instances of the gadget $\operatorname{EQ_\text{lin}}(x_1, x_2, x_3, x_4)$ and it is easy to verify that each instance consists of a linear clause set. We conclude that $\mathcal{C}'$ is linear. Second, let us show (2). We partition $\operatorname{EQ_\text{lin}}(x_1, x_2, x_3, x_4)$ as follows:  
\begin{align*}
\operatorname{EQ_\text{lin}}(x, 1) &= \{
\{a, d, x_3\}, \{b, e, x_2\}, \{c, f, x_4\}
\}, \\
\operatorname{EQ_\text{lin}}(x, 2) &= 
\{
\{a, c, x_1\}, \{b, d, x_4\}, \{e, f, x_3\}
\}, \\
\operatorname{EQ_\text{lin}}(x, 3) &= \{
\{a, e, x_4\}, \{b, f, x_1\}, \{c, d, x_2\}
\}, \\
\operatorname{EQ_\text{lin}}(x, 4) &= 
\{
\{a, f, x_2\}, \{b, c, x_3\}, \{d, e, x_1\}
\}.
\end{align*}  

Next, we decompose  $\mathcal{C}' = \bigcup_{i=1}^4 C'_i$ into four partitions by setting
\[
C'_i := C^{(i)} \cup \bigcup_{x \in V} \operatorname{EQ_\text{lin}}(x, i). 
\]
Note that $C^{(i)}$ is a partition of $V^{(i)}$ and $C'_i \setminus C^{(i)}$ is a partition of $V' \setminus V^{(i)}$. Hence, $C'_i$ is a partition of $V'$ for each $i \in \{1,2,3,4\}$. Assume towards a contradiction that $C'_1$, $C'_2$, $C'_3$ and $C'_4$ are not pairwise disjoint. Then, there exists a clause $c_j \in C'_{i_1} \cap C'_{i_2}$ with $1 \leq i_1 < i_2 \leq 4$. Clearly, $c_j \not\in \bigcup_{x \in V} (\operatorname{EQ_\text{lin}}(x, i_1) \cup \operatorname{EQ_\text{lin}}(x, i_2))$ and, thus, $c_j \in (C^{(i_1)} \cap C^{(i_2)})$. Since $C^{(i_1)}$ and $C^{(i_2)}$ do not share any variables, we have $c_j \in C^{(i_1)} \cap C^{(i_2)} = \emptyset$; a contradiction. Hence, the four partitions $C'_1, \ldots, C'_4$ of $V'$ are pairwise disjoint. Since $\mathcal{C}'$ consists of four pairwise disjoint partitions of $V'$, each variable in $V'$ appears exactly four times. Moreover, it is easy to see that each clause in $\mathcal{C}'$ consists of exactly three unnegated variables. \\

It remains to show that $\mathcal{C}'$ is nae-satisfiable if and only if $\mathcal{C}$ is nae-satisfiable. 

``$\Rightarrow$'' Let $\beta'$ be a truth assignment for $V'$ that nae-satisfies $\mathcal{C}'$. By Lemma~\ref{lem:eq-gadget-linear-nae-sat}, we have $\beta'(x_1) = \beta'(x_2) = \beta'(x_3) = \beta'(x_4)$. Hence, we obtain a truth assignment $\beta$ for $V$ that nae-satisfies $\mathcal{C}$ by setting $\beta(x) = \beta(x_1)$ for each $x \in V$. 

``$\Leftarrow$'' Let $\beta$ be a truth assignment for $V$ that nae-satisfies $\mathcal{C}$. Then, we obtain a truth assignment $\tilde{\beta}$ for $\bigcup_{x \in V}\{x_1,x_2, x_3, x_4\}$ by setting, for each $x \in V$, 
\[
\tilde{\beta}(x_i) = \beta(x), \quad i \in \{1,2,3,4\}.
\]
By construction, $\tilde{\beta}$ nae-satisfies each clause in $\bigcup_{i = 1}^4 C^{(i)}$. Note that we have $\tilde{\beta}(x_1) = \tilde{\beta}(x_2) = \tilde{\beta}(x_3) = \tilde{\beta}(x_4)$ for each $x \in V$. Hence, by multiple applications of Lemma~\ref{lem:eq-gadget-linear-nae-sat}, we can extend $\tilde{\beta}$ to a truth assignment $\beta'$ for $V'$ that nae-satisfies each clause in 
$
\bigcup_{x \in V} \operatorname{EQ_\text{lin}}(x_1, x_2, x_3, x_4)
$
by setting the auxiliary variables appropriately. Thus, $\beta'$ nae-satisfies $\bigcup_{i = 1}^4 C^{(i)} \cup \bigcup_{x \in V} \operatorname{EQ_\text{lin}}(x_1, x_2, x_3, x_4) = \mathcal{C'}$.

We conclude the proof by remarking that the transformation is polynomial.
\end{proof}

\subsection{Hardness for an increased number of partitions}\label{sub:general-k}

Now, we provide a generalization of  Theorem~\ref{thm:pos-lin-E4}, making use of the following lemma.

\begin{lem}
Let $k \geq 4$ be a positive integer. If \textsc{Positive Linear $k$-Disjoint NAE-$3$-Sat-E$k$} is \npc, then so is \textsc{Positive Linear $q$-Disjoint NAE-$3$-Sat-E$q$} with $q = k+1$. 
\end{lem}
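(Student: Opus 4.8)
The plan is to reduce \textsc{Positive Linear $k$-Disjoint NAE-$3$-Sat-E$k$} to \textsc{Positive Linear $q$-Disjoint NAE-$3$-Sat-E$q$} for $q=k+1$ by adding a single extra partition $C_{k+1}$ of the variable set, constructed so that it is (1) disjoint from $C_1,\ldots,C_k$, (2) linear relative to all existing clauses, and (3) \emph{without constraining power}, i.e.\ the new clauses can always be nae-satisfied given any nae-satisfying assignment of the original instance, and conversely any nae-satisfying assignment of the enlarged instance restricts to one of the original. The simplest way to guarantee (3) is to pair each original variable $x$ with a fresh ``twin'' variable $x'$ and a fresh ``free'' variable; more precisely, I would introduce for each $x\in V$ two new variables and build the extra partition from triples that are always easy to nae-satisfy (for instance, a triple $\{x, x', u\}$ can be nae-satisfied by choosing the value of the fresh variable $u$ opposite to $\beta(x)$, and the twins $x'$ can be set freely). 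One must then also supply the twin and free variables with the remaining $k+1$ appearances they need, which is the place where some care is required.

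Concretely, I would proceed as follows. First, for each $x\in V$ introduce $k$ fresh variables; intuitively we will need enough fresh material that every newly introduced variable also ends up in exactly $q=k+1$ clauses across $q$ partitions. Second, extend each of the original partitions $C_1,\ldots,C_k$ to cover the new variables as well, by adding to $C_i$ a set of triples consisting solely of fresh variables — this is possible because the number of fresh variables can be chosen to be a multiple of $3$, and these all-fresh triples impose no constraint interaction with the original clauses (linearity is immediate since they share no variable with anything seen before, and they can always be nae-satisfied by a 2-1 colouring of their three fresh variables). Third, build the $(k+1)$-st partition $C_{k+1}$ of the full variable set $V'$: here each original variable $x$ appears in exactly one triple of $C_{k+1}$, and I would arrange that triple to contain $x$ together with two fresh variables, so that given $\beta(x)$ we can always pick the two fresh values to avoid monochromaticity; the fresh variables not yet used in this partition are grouped into all-fresh triples. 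Finally I would verify the three structural conditions (each $C'_i$ is a partition of $V'$; the $C'_i$ are pairwise disjoint; the whole clause set is linear) and the equivalence of nae-satisfiability, the forward direction by restriction and the backward direction by the explicit extension just described (invoking the symmetry of nae-assignments to handle the choice of values for the fresh variables uniformly).

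The main obstacle I anticipate is \emph{bookkeeping rather than ideas}: one must simultaneously satisfy the ``each $C'_i$ is a partition'' requirement (so every variable, old and new, lies in exactly one triple of each of the $q$ partitions, hence appears exactly $q$ times), the disjointness of the partitions (no triple repeated across partitions — this is where reusing the construction of Lemma~\ref{lem:disjoint-to-partition} or taking several parallel copies of the fresh gadgets may be cleanest, mirroring the ``three copies'' device used there), and linearity (no two triples sharing two variables — delicate precisely among the fresh triples that recur across the $q$ partitions, and among the triples of $C_{k+1}$ that touch original variables). A clean way to sidestep most of this is to observe that $C_{k+1}$ need only be \emph{some} partition of $V'$ into size-$3$ subsets that is disjoint from and linear with the previous ones and whose clauses are trivially nae-satisfiable once the original variables' values are fixed; since we have complete freedom over the fresh variables, such a partition exists by a straightforward counting/greedy argument, and spelling that argument out carefully is the one step that deserves real attention.

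I would therefore present the proof as: (i) set up $V'$ by adjoining a suitable number of fresh variables per original variable, chosen so all size constraints work out; (ii) define $C'_i=C_i\cup(\text{all-fresh triples})$ for $i\le k$ and $C'_{k+1}=(\text{$x$-plus-two-fresh triples})\cup(\text{all-fresh triples})$, checking it is a valid instance; (iii) prove nae-satisfiability equivalence, forward by restriction, backward by the explicit assignment of fresh variables; (iv) note polynomial time and that the construction preserves membership in \textsf{NP}, completing the induction step from $k$ to $k+1$.
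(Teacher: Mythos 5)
Your approach differs from the paper's, and it has a genuine gap. The paper avoids fresh ``filler'' variables entirely: it takes three disjoint renamed copies of the whole instance, lets $C'_i$ (for $i \leq k$) be the union of the three copies of $C_i$, and lets the new $(k+1)$-st partition consist of the linking triples $\{x_1, x_2, x_3\}$ formed by the three copies of each original variable. Linearity and pairwise disjointness are then immediate (a linking triple meets each copy in exactly one variable, and clauses from different copies are disjoint), and for the backward direction one sets copies $1$ and $2$ of $x$ to $\beta(x)$ and copy $3$ to $\overline{\beta(x)}$: every linking triple is nae-satisfied because $x_1 \neq x_3$, and the flipped third copy is nae-satisfied by the symmetry of nae-satisfying assignments.

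The gap in your construction is the claim that the all-fresh triples you add to $C'_1, \ldots, C'_k$ ``impose no constraint interaction'' and ``can always be nae-satisfied by a 2-1 colouring of their three fresh variables''. That treats each triple in isolation, but a single global truth assignment must nae-satisfy all of them, and each fresh variable recurs in $k+1$ triples (one per partition). Hence the all-fresh triples form, on the fresh variable set, precisely an instance of \textsc{Positive Linear $(k+1)$-Disjoint NAE-$3$-Sat-E$(k+1)$} --- and Proposition~\ref{propo:example} shows that such instances need not be nae-satisfiable. So a counting or greedy argument that only secures linearity, disjointness, and the partition property does not suffice: you must exhibit a concrete arrangement of the fresh triples together with an explicit assignment of the fresh variables that nae-satisfies all of them simultaneously and remains compatible with the mixed triples $\{x, x', u\}$ for either value of $\beta(x)$. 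This can be done (gadgets in the spirit of the always-satisfiable sets $\mathcal{S}(t,q)$ in Lemma~\ref{lem:disjoint-to-partition} are one route), but as written the step is missing, and it is exactly the step that the paper's three-copies-plus-flip device is designed to make unnecessary.
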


\begin{proof}
Let $\mathcal{I} = (V, \mathcal{C})$ be an instance of \textsc{Positive Linear $k$-Disjoint NAE-$3$-Sat-E$k$} with $\mathcal{C} = \bigcup_{i = 1}^k C_i$ and $k \geq 4$.  We define
\[
V^{(j)} := \{x_j \mid x \in V\}, \quad j \in \{1,2,3\},
\]
which are copies of the variable set $V$ obtained by renaming variables, and 
\[
\mathcal{C}^{(j)} := \{\{x_j, y_j, z_j\} \mid \{x, y, z\} \in \mathcal{C} \}, \quad j \in \{1,2,3\},
\]
which are again obtained from $\mathcal{C}$ by renaming variables. Intuitively, we make three copies of $\mathcal{I}$ where each copy has its own variables. We define an instance $\mathcal{I}' = (V', \mathcal{C}')$ of \textsc{Positive Linear $q$-Disjoint NAE-$3$-Sat-E$q$} with $q = k+1$ as follows: 
\begin{align*}
V' &:= V^{(1)} \cup V^{(2)} \cup V^{(3)} \text{, and } \\
\mathcal{C}' &:= \mathcal{C}^{(1)} \cup \mathcal{C}^{(2)} \cup \mathcal{C}^{(3)} \cup \{\{x_1,\, x_2,\, x_3\} \mid x \in V\}. 
\end{align*}
For each variable $x$ of the original instance $\mathcal{I}$, the latter part of $\mathcal{C}'$ in the definition above contains the clause $\{x_1, x_2, x_3\}$ which consists of the three copies of $x$.  Now, we partition the clauses in~$\mathcal{C}'$ as follows:
\begin{align*}
\mathcal{C}'_i &=
\{\{x_1,\, y_1,\, z_1\}, \{x_2,\, y_2,\, z_2\}, \{x_3,\, y_3,\, z_3\} \mid \{x,\, y,\, z\} \in C_i \}, \quad &1 \leq i \leq k, \\
\mathcal{C}'_q &=
\{\{x_1,\, x_2,\, x_3\} \mid x \in V\}, \quad &q = k+1.
\end{align*}    
Intuitively, for $1 \leq i \leq k$, the collection $\mathcal{C}'_i$ consists of all copies of the clauses contained in $C_i$. It is straightforward to verify that, for each $i \in \{1,2,\ldots,k+1\}$, $\mathcal{C}'_i$ is a partition of $V'$ into subsets of size 3.  Since each clause in $\mathcal{C}'_q$ has exactly one variable of each set $V^{(1)}$, $V^{(2)}$ and $V^{(3)}$, it has at most one variable in common with any other clause in $\mathcal{C}'$. Further, for $1 \leq i \leq k$, the intersection of two clauses from different copies of $\mathcal{C}$ is empty. It now follows that (i) $\mathcal{C'}$ is linear and (ii) $\mathcal{C}'_1, \mathcal{C}'_2, \ldots, \mathcal{C}'_{k+1} = \mathcal{C}'_q$ are pairwise disjoint partitions of $V'$. By (ii), each variable in $V'$ appears exactly $q = k+1$ times in $\mathcal{C}'$.  We conclude that $\mathcal{I}'$ is indeed an instance of \textsc{Positive Linear $q$-Disjoint NAE-$3$-Sat-E$q$}. 

We now show that $\mathcal{I}'$ is a yes-instance if and only if $\mathcal{I}$ is a yes-instance.

``$\Rightarrow$'' Let $\beta'\colon V' \rightarrow \{T,F\}$ be a truth assignment that nae-satisfies $\mathcal{C}'$. Then, $\beta'$ nae-satisfies $\mathcal{C}^{(1)}$. Thus, we obtain a truth assignment $\beta$ that nae-satisfies $\mathcal{C}$ by setting $\beta(x) = \beta(x_1)$ for each $x \in V$, where $x_1$ is the first copy of $x$.

``$\Leftarrow$'' Let $\beta\colon V \rightarrow \{T,F\}$ be a truth assignment that nae-satisfies $\mathcal{C}$. We now define a truth assignment $\beta'\colon V' \rightarrow \{T,F\}$ as follows. For $x \in V$, we set the copies of $x$ in $V' = V^{(1)} \cup V^{(2)} \cup V^{(3)}$ to the following truth values:  
\[
\beta'(x_1) = \beta'(x_2) = \beta(x) \text{ and } \beta'(x_3) = \overline{\beta(x)}.  
\]
Clearly, $\beta'$ nae-satisfies $\mathcal{C}^{(1)}$ and $\mathcal{C}^{(2)}$.  For the variables in $V^{(3)}$, we flipped the truth values. Thus, by symmetry of nae-satisfying assignments, $\beta'$ nae-satisfies $\mathcal{C}^{(3)}$. Since $\beta'(x_1) \neq \beta'(x_3)$ for each $x \in V$, all clauses in $\{\{x_1,\, x_2,\, x_3\} \mid x \in V\}$ are nae-satisfied. Hence, $\beta'$ nae-satisfies each clause in $\mathcal{C}'$. 

 Observing that the transformation is polynomial completes the proof.\end{proof}

Combining the above lemma with Theorem~\ref{thm:pos-lin-E4} directly yields \npcn\ of  \textsc{Positive Linear $k$-Disjoint NAE-$3$-Sat-E$k$} for each fixed $k\geq 4$. 

\begin{thm}\label{thm:main-nae-npc}
For each fixed integer $k \geq 4$, \textsc{Positive Linear $k$-Disjoint NAE-$3$-Sat-E$k$} is \npc.
\end{thm}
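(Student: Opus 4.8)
The plan is to prove Theorem~\ref{thm:main-nae-npc} by induction on $k$, using Theorem~\ref{thm:pos-lin-E4} as the base case and the preceding lemma as the inductive step. Membership in \np\ is immediate: a nae-satisfying truth assignment is a polynomial-size certificate that can be checked in polynomial time, and this property is inherited by every restricted variant since we only add constraints on the instances, not on the question. So the real content is \nphn.

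\medskip

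For \nphn, first observe that Theorem~\ref{thm:pos-lin-E4} establishes that \textsc{Positive Linear 4-Disjoint NAE-$3$-Sat-E$4$} is \npc, which gives the base case $k = 4$. For the inductive step, suppose $k \geq 4$ is a fixed integer for which \textsc{Positive Linear $k$-Disjoint NAE-$3$-Sat-E$k$} is \npc. Then the preceding lemma shows that \textsc{Positive Linear $(k+1)$-Disjoint NAE-$3$-Sat-E$(k+1)$} is \npc\ as well, since the lemma provides a polynomial-time reduction from the former to the latter. By induction, \textsc{Positive Linear $k$-Disjoint NAE-$3$-Sat-E$k$} is \npc\ for every fixed integer $k \geq 4$, as claimed.

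\medskip

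There is no substantive obstacle here: the theorem is a direct corollary of the two immediately preceding results, and the only thing to be careful about is making the induction explicit. One minor subtlety worth spelling out is that $k$ is treated as fixed throughout --- for each fixed $k$ we get a separate \npcn\ statement, and the reductions in the chain $4 \to 5 \to \cdots \to k$ each blow up the instance size by a constant factor (roughly tripling the number of variables and clauses per step), so after a fixed number $k - 4$ of steps the total blow-up is still polynomial. This is why the statement is phrased "for each fixed $k$" rather than claiming a single uniform reduction; making this observation is the cleanest way to close the argument.

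\begin{proof}
The problem is in \np, as a nae-satisfying truth assignment serves as a certificate that can be verified in polynomial time. We prove \nphn\ by induction on $k$. The base case $k = 4$ is exactly Theorem~\ref{thm:pos-lin-E4}. For the inductive step, assume that \textsc{Positive Linear $k$-Disjoint NAE-$3$-Sat-E$k$} is \npc\ for some fixed $k \geq 4$. By the preceding lemma, \textsc{Positive Linear $q$-Disjoint NAE-$3$-Sat-E$q$} with $q = k+1$ is then \npc\ as well. Hence, by induction, \textsc{Positive Linear $k$-Disjoint NAE-$3$-Sat-E$k$} is \npc\ for each fixed integer $k \geq 4$. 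Note that for a fixed target value of $k$, the reduction is obtained by composing finitely many polynomial-time reductions, one for each step from $4$ up to $k$, and is therefore itself computable in polynomial time.
\end{proof}
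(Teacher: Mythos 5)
Your proof is correct and matches the paper exactly: the paper also obtains this theorem by combining Theorem~\ref{thm:pos-lin-E4} as the base case with the preceding lemma as the inductive step. Your explicit remark that the composition of the fixed number $k-4$ of polynomial reductions remains polynomial is a reasonable clarification of what the paper leaves implicit.
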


In contrast, in the next section it turns out that for  $k\in \{1,2,3\}$,  \textsc{Positive Linear $k$-Disjoint NAE-$3$-Sat-E$k$} is in \p.

\subsection{Deciding nae-satisfiability of instances with less than four partitions  is in~\p\ }\label{sub:lessthanfour}

In this section, we derive the result that for $k \in \{1,2,3\}$---even when the linearity condition is dropped---\textsc{Positive  $k$-Disjoint NAE-$3$-Sat-E$k$} is in \p. In particular, for $k \in \{1,2\}$, each instance of \textsc{Positive  $k$-Disjoint NAE-$3$-Sat-E$k$} is always a yes-instance. As a straightforward consequence, we get corresponding positive results for  \textsc{Positive Linear $k$-Disjoint NAE-$3$-Sat-E$k$}. However, whether or not there are instances of \textsc{Positive Linear $3$-Disjoint NAE-$3$-Sat-E$3$}, or even of \textsc{Positive  $3$-Disjoint NAE-$3$-Sat-E$3$} (i.e., without the linearity constraint), which are not nae-satisfiable remain interesting open questions.

\begin{propo}
For $k \in \{1, 2, 3\}$,  \textsc{Positive  $k$-Disjoint NAE-$3$-Sat-E$k$} is in \p. In addition, for  $k \in \{1, 2\}$,  each instance of \textsc{Positive  $k$-Disjoint NAE-$3$-Sat-E$k$} is a yes-instance.
\end{propo}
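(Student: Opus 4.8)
The plan is to handle the three values of $k$ separately, and within each case to exhibit an explicit nae-satisfying assignment (for $k\in\{1,2\}$) or a small, easily-checkable characterization that can be tested in polynomial time (for $k=3$).

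For $k=1$ the claim is almost immediate: an instance consists of a single partition $C_1$ of $V$ into triples, so $|V| = 3m$ for some $m$, and the clauses are pairwise disjoint. I would simply pick, in each triple $\{x,y,z\}$, one variable to set to $T$ and the other two to $F$; since the triples are disjoint this is well-defined and nae-satisfies every clause. For $k=2$ the instance is $C_1 \cup C_2$ with $C_1, C_2$ both partitions of $V$. Here I would think of the natural bipartite-type structure: form the multigraph (or rather, use a proper 2-coloring argument) on $V$ where I want each triple of $C_1$ and each triple of $C_2$ to be non-monochromatic. The clean way is to observe that the ``conflict'' constraints are so sparse that a greedy/alternating argument works: consider the graph $G$ on vertex set $V$ whose edges pair up, within each triple of $C_1$, two of the three variables (say the first two under some fixed ordering), and similarly within each triple of $C_2$; then each vertex has degree at most $2$ in $G$, so $G$ is a disjoint union of paths and cycles and hence is $2$-colorable unless it contains an odd cycle — and I would argue that even an odd cycle can be accommodated because each triple has a third ``free'' variable that was deliberately left out of $G$, so a single bad edge can be repaired locally by recoloring the free variable of the offending triple. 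Making this repair argument precise, and checking it never creates a new monochromatic triple, is the one place that needs care.

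For $k=3$ the instance is $C_1\cup C_2\cup C_3$ with three partitions of $V$, and now yes-instances need not be all instances (indeed the paper flags the linear $3$-partition case as open), so I would aim for a polynomial-time algorithm rather than a universal construction. The approach I would take is the one foreshadowed in the ``Proof techniques'' subsection: reduce nae-satisfiability to a 2-SAT-type problem. Concretely, from the three partitions one gets a system of constraints; I would try to show that fixing the truth value of just one variable (or one variable per connected component of a suitable auxiliary graph) propagates via Case (ii)/(iii)-style implications to force all other values, so that the whole search reduces to trying $O(n)$ starting assignments and running unit-propagation/2-SAT on each, which is polynomial. Alternatively — and this may be the cleanest route — I would invoke the correspondence with hypergraph bicolorability: an instance here is a linear-or-not $3$-uniform $3$-regular hypergraph whose edge set decomposes into three perfect matchings, equivalently (collapsing the two colors) a question about $2$-coloring, and for $3$-uniform hypergraphs with this much structure one can reduce bicolorability to a graph $2$-coloring / 2-SAT instance.

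The main obstacle I anticipate is the $k=3$ case: unlike $k\le 2$ there is no universal ``every instance is a yes-instance'' shortcut, so one must genuinely produce an algorithm and prove its correctness, and the delicate point is showing that the implication-propagation from the learned $2$-clauses is both \emph{sound} (never wrongly rejects) and \emph{complete} (captures enough structure that polynomially many starting choices suffice to decide satisfiability). For $k\in\{1,2\}$ the only subtlety is the odd-cycle repair in the $k=2$ argument, which I expect to dispatch with a short local-modification lemma; everything else there is routine bookkeeping about disjoint triples.
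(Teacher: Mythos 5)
Your proposal takes a very different route from the paper, and in its current form it has genuine gaps in two of the three cases. The paper's proof is essentially by citation: for $k\in\{1,2\}$ it invokes a theorem of Henning and Yeo stating that every instance of \textsc{NAE-$3$-Sat-$3$} with fewer clauses than variables is nae-satisfiable (here $|\mathcal{C}|=k\cdot|V|/3<|V|$ for $k<3$), and for $k=3$ it invokes Filho's result that \textsc{Positive NAE-$r$-Sat-E$r$} is in \p\ for every fixed $r$. Your $k=1$ construction is correct and self-contained, and arguably nicer than a citation.

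For $k=2$, the gap is exactly the one you flag yourself: the odd-cycle repair. Recoloring the third ``free'' variable $z$ of a triple $\{x,y,z\}$ whose chosen edge $\{x,y\}$ came out monochromatic does fix that triple, but $z$ is also the endpoint of an edge (or the free variable of a triple) in the \emph{other} partition, so the recoloring can create a new monochromatic triple there, and you have not shown the cascade terminates. Moreover, the graph $G$ depends on an arbitrary choice of two variables per triple, and you would need to argue either that some choice makes $G$ bipartite or that the repair works for every choice; neither is established. This case is probably salvageable (the counting bound $|\mathcal{C}|<|V|$ is the structural reason it is true), but as written it is a sketch of an idea, not a proof.

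The $k=3$ case is the serious problem. You propose that fixing one variable per component and running unit-propagation / 2-SAT decides the instance, but you give no reason why this propagation is \emph{complete}. Fixing a single variable of a positive $3$-clause only yields the information that the remaining two literals cannot both take a forbidden value (Case~(iii) of the paper's toolbox); it does not force any further assignment, so there is no cascade of forced values to speak of, and \textsc{NAE-$3$-Sat} does not in general reduce to 2-SAT. The polynomial solvability of the degree-$3$ case is a genuinely nontrivial theorem (the paper cites Filho, Cor.~3.3.3), not something that falls out of implication-chasing; your alternative suggestion of ``collapsing to graph $2$-coloring'' is likewise unsubstantiated. As it stands, the $k=3$ part of the proposal asserts the existence of an algorithm without proving either its correctness or its completeness.
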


\begin{proof}
Let $\mathcal{I} = (V, \mathcal{C})$ be an instance of \textsc{Positive $k$-Disjoint NAE-$3$-Sat-E$k$} with $\mathcal{C} = \bigcup_{i = 1}^k C_i$ and $k \in \{1, 2, 3\}$.

First, let $k \in \{1,2\}$. Henning and Yeo~\cite[Thm.\ 4]{Henning2018} showed that each instance of \textsc{NAE-$3$-Sat-$3$} with less clauses than variables is nae-satisfiable and contains a \emph{free variable}. Intuitively, the truth value of a free variable does not affect nae-satisfiabilty of the instance and can thus be chosen freely. As each clause in $\mathcal{C}$ contains three variables and each variable in $V$ appears exactly $k < 3$ times, we have $|\mathcal{C}| < |V|$ and the result by Henning and Yeo applies. 

Second, let $k = 3$. In this case, the statement follows from the more general result that \textsc{Positive NAE-$r$-Sat-E$r$} is in \p\ for any fixed choice of $r$ (Corollary 3.3.3 in \cite{Filho}). 
\end{proof}

\subsection{Summary: A dichotomy with respect to the number of partitions}\label{sub:summary}

Summarizing our findings in Section~\ref{sec:main-results} for \textsc{Positive Linear $k$-Disjoint NAE-$3$-Sat-E$k$}, we have settled its computational complexity status for any fixed positive integer $k$; the corresponding dichotomy is captured in the below theorem.

\begin{thm}
\textsc{Positive Linear $k$-Disjoint NAE-$3$-Sat-E$k$} is 
\begin{itemize}
\item \npc\ for each fixed integer $k \geq 4$,
\item in \p\ for $k=3$,
\item always a yes-instance if $k \in \{1, 2\}$. 
\end{itemize}
\end{thm}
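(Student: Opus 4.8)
The plan is to assemble the three bullet points directly from results already established in this section, so the proof is essentially a matter of citing and combining them. First I would recall Theorem~\ref{thm:main-nae-npc}, which gives \npcn\ of \textsc{Positive Linear $k$-Disjoint NAE-$3$-Sat-E$k$} for every fixed $k \geq 4$; this settles the first bullet outright. For $k = 3$, the preceding Proposition tells us that even \textsc{Positive $3$-Disjoint NAE-$3$-Sat-E$3$} (without linearity) is in \p\ via Corollary 3.3.3 in \cite{Filho}; since imposing the linearity constraint only restricts the instance class, \textsc{Positive Linear $3$-Disjoint NAE-$3$-Sat-E$3$} is a subproblem of a problem in \p\ and hence also in \p. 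That handles the second bullet.

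For the third bullet, I would again invoke the same Proposition: for $k \in \{1,2\}$ every instance of \textsc{Positive $k$-Disjoint NAE-$3$-Sat-E$k$} is a yes-instance, the argument being that each such instance has $|\mathcal{C}| = \tfrac{k}{3}|V| < |V|$ clauses and so the theorem of Henning and Yeo~\cite{Henning2018} applies. Every instance of \textsc{Positive Linear $k$-Disjoint NAE-$3$-Sat-E$k$} is in particular an instance of \textsc{Positive $k$-Disjoint NAE-$3$-Sat-E$k$}, so the ``always a yes-instance'' conclusion transfers verbatim. (One should note in passing that the decision problem is then trivially in \p\ for $k \in \{1,2\}$ as well, so the three bullets are genuinely exhaustive over all fixed positive integers $k$.)

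Honestly, there is no real obstacle here: the theorem is purely a packaging of Theorem~\ref{thm:main-nae-npc} and the Proposition in Section~\ref{sub:lessthanfour}, and the only thing to be careful about is the direction of the subproblem containment — adding a constraint (linearity) can only shrink the instance family, so membership in \p\ and the ``all yes-instances'' property are both inherited downward, whereas \nphn\ would \emph{not} be (which is exactly why the $k \geq 4$ case needed the separate construction in Theorems~\ref{thm:pos-lin-E4} and~\ref{thm:main-nae-npc} rather than a one-line reduction). So the write-up is three short sentences, one per bullet, each a citation plus the trivial observation that restricting the instance class preserves the relevant property.

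\begin{proof}
The first item is exactly Theorem~\ref{thm:main-nae-npc}. For the second item, observe that every instance of \textsc{Positive Linear $3$-Disjoint NAE-$3$-Sat-E$3$} is in particular an instance of \textsc{Positive $3$-Disjoint NAE-$3$-Sat-E$3$}; by the Proposition in Section~\ref{sub:lessthanfour} the latter problem is in \p, and hence so is the restriction to linear instances. For the third item, note likewise that every instance of \textsc{Positive Linear $k$-Disjoint NAE-$3$-Sat-E$k$} with $k \in \{1,2\}$ is an instance of \textsc{Positive $k$-Disjoint NAE-$3$-Sat-E$k$}, which by the same Proposition is always a yes-instance.
\end{proof}
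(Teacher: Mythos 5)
Your proposal is correct and matches the paper's own treatment exactly: the theorem is a packaging of Theorem~\ref{thm:main-nae-npc} for $k\geq 4$ together with the Proposition of Section~\ref{sub:lessthanfour} for $k\leq 3$, with the (correct) observation that membership in \p\ and the ``always a yes-instance'' property are inherited by the linear subclass. Nothing is missing.
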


Observe that this dichotomy immediately implies an analogous statement for \textsc{$\mathcal{H}$-Bicolorability} in hypergraphs as stated in Corollary~\ref{cor:hyper}.

\begin{cor}\label{cor:hyper}
Let $\mathcal{H}$ be the family of linear 3-uniform k-regular hypergraphs such that, for each $H=(\mathcal{V},E) \in \mathcal{H}$, a decomposition $E = \bigcup_{i = 1}^k M_i$ into perfect matchings $M_1, \ldots, M_k$ exists and is given as part of the input. Then \textsc{$\mathcal{H}$-Bicolorability} is
\begin{itemize}
\item \npc\ for each fixed integer $k \geq 4$,
\item in \p\ for $k=3$,
\item always a yes-instance if $k \in \{1, 2\}$.  
\end{itemize}
\end{cor}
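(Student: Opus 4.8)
The plan is to observe that Corollary~\ref{cor:hyper} is a direct translation of the immediately preceding dichotomy theorem for \textsc{Positive Linear $k$-Disjoint NAE-$3$-Sat-E$k$} via the well-known equivalence between \textsc{Positive NAE-$3$-Sat} and \textsc{Bicolorability} for $3$-uniform hypergraphs, so essentially no new mathematical content is required---only a careful bookkeeping of how the structural restrictions correspond under this translation. First I would make the dictionary explicit: given an instance $(V,\mathcal{C})$ of \textsc{Positive Linear $k$-Disjoint NAE-$3$-Sat-E$k$} with $\mathcal{C}=\bigcup_{i=1}^k C_i$, associate the hypergraph $H=(\mathcal{V},E)$ with $\mathcal{V}:=V$ and $E:=\mathcal{C}$, i.e.\ each positive $3$-clause $\{x,y,z\}$ becomes the edge $\{x,y,z\}$. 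Then a truth assignment $\beta\colon V\to\{T,F\}$ nae-satisfies $\mathcal{C}$ if and only if the corresponding $2$-coloring $\gamma\colon\mathcal{V}\to\{1,2\}$ (identifying $T$ with colour~$1$ and $F$ with colour~$2$) leaves no edge monochromatic; hence $\mathcal{C}$ is nae-satisfiable iff $H$ is bicolorable. This correspondence is a bijection on instances, and I would spell out that it is computable in polynomial time in both directions, which is all that is needed to transport \npcn\ and membership in \p.

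Next I would verify that the structural properties match up under this bijection, going through the three conditions in the boxed problem definition. Property~(iii), $|c\cap c'|\le 1$ for distinct clauses, is exactly linearity of $H$: $|e_i\cap e_j|\le 1$ for distinct edges. Since each $C_i$ is a partition of $V$ into triples, the edge set $M_i$ corresponding to $C_i$ is a set of pairwise disjoint triples covering $\mathcal{V}$, i.e.\ a perfect matching; so Property~(i) says precisely that $E=\bigcup_{i=1}^k M_i$ decomposes into perfect matchings $M_1,\dots,M_k$. Property~(ii), $C_i\cap C_j=\emptyset$ for $i\ne j$, says the matchings are pairwise edge-disjoint, which together with the partition property forces each vertex to lie in exactly $k$ edges, i.e.\ $H$ is $3$-uniform (each edge has size~$3$) and $k$-regular. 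Conversely, given $H\in\mathcal{H}$ with the prescribed decomposition into perfect matchings, reading each $M_i$ back as a collection $C_i$ of positive $3$-clauses yields an instance satisfying (i)--(iii). Thus the two problems have literally the same yes/no answer on corresponding inputs, and the decomposition being ``given as part of the input'' on the hypergraph side matches the $k$-Disjoint prefix providing the partition $\{C_1,\dots,C_k\}$ on the SAT side.

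With the equivalence established, the corollary follows line by line from the dichotomy theorem: for fixed $k\ge 4$, \npcn\ of \textsc{Positive Linear $k$-Disjoint NAE-$3$-Sat-E$k$} transfers to \textsc{$\mathcal{H}$-Bicolorability} via the polynomial-time reduction in each direction (and membership in \np\ is immediate, since a valid $2$-colouring is a polynomial-size certificate checkable in polynomial time); for $k=3$, the polynomial-time decision procedure for the SAT variant yields one for \textsc{$\mathcal{H}$-Bicolorability}; and for $k\in\{1,2\}$, the fact that every such SAT instance is a yes-instance means every such hypergraph is bicolorable. I expect no genuine obstacle here---the only thing that needs care is making the translation of all three structural constraints watertight (in particular, not conflating ``$C_i$ is a partition'' with ``the $C_i$ are disjoint,'' and recording that a perfect matching of a $3$-uniform hypergraph is exactly a partition of the vertex set into triples), and being explicit that the correspondence is polynomial-time computable so that complexity classes are genuinely preserved in both directions.
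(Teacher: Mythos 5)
Your proposal is correct and takes essentially the same route as the paper: the paper derives Corollary~\ref{cor:hyper} directly from the dichotomy theorem via the equivalence (already noted in the preliminaries) between \textsc{$\mathcal{H}$-Bicolorability} for this family and \textsc{Positive Linear $k$-Disjoint NAE-$3$-Sat-E$k$}, which is exactly the variable/vertex, clause/edge, partition/perfect-matching dictionary you spell out. The paper treats this as immediate and gives no further argument, so your explicit bookkeeping is simply a more detailed version of the same proof.
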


\section{Additional results: Non-monotonicity, Planarity, Clauses of size 2 and 3}\label{sec:additional}

In this section, we deviate from the problem \textsc{Positive Linear $k$-Disjoint NAE-$3$-Sat-E$k$} in three different ways, leading to three different settings. 
First, we drop the monotonicity condition, i.e., a clause may contain both a positive and a negative literal. For this setting, we can derive  a computational complexity dichotomy with respect to the number of variable appearances (Section~\ref{sub:Non-monotone}). 
Second, we allow also for clauses of size $2$ as well. We show that this decreases the number of partitions that builds the barrier between \p\ and \npcn\ (Section~\ref{sub:2or3}).
Third, we restrict ourselves to planar instances; from a result from the literature we can derive the general result that each instance of \textsc{Positive Planar NAE-Sat} with at least three distinct variables per clause is nae-satisfiable (Section~\ref{sub:planar}).

\subsection{Dichotomy for \textsc{Linear NAE-$3$-Sat} where each variable appears $p$ times unnegated and $q$ times negated}\label{sub:Non-monotone}

Let $p$ and $q$ be fixed non-negative integers. As an interesting consequence of Theorem~\ref{thm:main-nae-npc}, we obtain for $p+q\geq 4$ a hardness result for \textsc{Linear NAE-$3$-Sat} on instances in which each variable appears exactly $p$ times unnegated and exactly $q$ times negated. In fact, we show that the following dichotomy holds. 

\begin{thm}\label{thm:nae-3-sat-(p,q)}
Let $p$ and $q$ be fixed non-negative integers. \textsc{Linear NAE-$3$-Sat} for instances in which each variable appears exactly $p$ times unnegated and exactly $q$ times negated is 
\begin{itemize}
\item \npc\ if $p+q \geq 4$,
\item in \p\ if $p+q \leq 3$. 
\end{itemize}
\end{thm}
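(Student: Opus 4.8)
The plan is to prove Theorem~\ref{thm:nae-3-sat-(p,q)} by splitting into the two complementary regimes $p+q\geq 4$ and $p+q\leq 3$, treating each essentially independently. For the hardness direction I would reduce from \textsc{Positive Linear $k$-Disjoint NAE-$3$-Sat-E$k$} with $k=p+q\geq 4$, which is \npc\ by Theorem~\ref{thm:main-nae-npc}. Given such an instance, each variable $x$ appears in exactly $k$ positive clauses, one from each partition $C_1,\dots,C_k$. To realize the pattern ``$p$ times unnegated, $q$ times negated,'' I would pick any $q$ of the $k$ partitions, say $C_{i_1},\dots,C_{i_q}$, and replace $x$ by a fresh negated literal $\overline{x}$ in precisely those partitions while keeping it unnegated in the other $p$; by the nae-symmetry (flipping all literals of a clause does not change nae-satisfiability), this substitution does not change whether the formula is nae-satisfiable, it preserves linearity (we never merge or split clauses, just rename/negate uniformly), and now every variable appears exactly $p$ times positively and $q$ times negatively. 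One caveat to check: we must ensure the chosen $q$ partitions are the same across all variables so the pattern is uniform — this is automatic since the partition index set $\{i_1,\dots,i_q\}$ can be fixed once and for all. Membership in \np\ is immediate, so this gives \npcn.

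For the tractable side $p+q\leq 3$, the cleanest route is to reduce to an already-known polynomial case. An instance of \textsc{Linear NAE-$3$-Sat} where each variable appears at most $p$ times unnegated and at most $q$ times negated (indeed exactly, but ``at most'' suffices for the upper bound) has each variable appearing in at most $p+q\leq 3$ clauses total, i.e., it is an instance of \textsc{NAE-$3$-Sat-$3$}. The key step is then to invoke the fact, already used in the excerpt (Henning and Yeo~\cite{Henning2018}, and more generally the literature on bounded-occurrence NAE-SAT), that \textsc{NAE-$3$-Sat-$3$} is solvable in polynomial time; alternatively one can directly count clauses versus variables ($m \le n$ when each of the $n$ variables occurs at most three times in $3$-clauses, giving $3m = \sum_{\text{vars}} (\text{occurrences}) \le 3n$, hence $m\le n$) and apply the Henning--Yeo result that such instances are always nae-satisfiable. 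I would state this as: when $p+q\le 3$ every instance in the family is in fact a yes-instance, which trivially places the decision problem in \p. (If $p+q=0$ the family is empty or contains only the variable set with no clauses, a trivial yes-instance; small edge cases like $p=0$ or $q=0$ are subsumed.)

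The main obstacle, and the only place requiring genuine care, is the hardness reduction: one has to verify that after the uniform negation substitution the instance still literally satisfies all the structural constraints demanded by the target problem — in particular that linearity survives (it does, since two clauses share a variable in the new instance iff they shared it in the old one, negation being irrelevant to the underlying variable set) and that the occurrence counts are \emph{exactly} $(p,q)$ rather than merely bounded, which is where having started from the \textsc{E$k$} variant (exact occurrences, via the disjoint-partition structure) rather than the at-most-$k$ variant is essential. I would devote a short paragraph to checking these invariants explicitly and to confirming the reduction is polynomial (it is linear-time: we just relabel literals). Everything else is bookkeeping.
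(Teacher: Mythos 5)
Your hardness reduction for $p+q\geq 4$ is the same as the paper's: start from \textsc{Positive Linear $k$-Disjoint NAE-$3$-Sat-E$k$} with $k=p+q$ and negate every literal occurrence in a fixed choice of $q$ of the $k$ partitions, using the fact that a truth assignment nae-satisfies $\{x,y,z\}$ if and only if it nae-satisfies $\{\overline{x},\overline{y},\overline{z}\}$; linearity and the exact occurrence counts survive for exactly the reasons you give. That half is fine.

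The tractable side has a genuine gap at $p+q=3$. Your occurrence count gives $3m=(p+q)n$, so for $p+q=3$ you get $m=n$, not $m<n$, and the Henning--Yeo result you invoke applies only to instances with \emph{strictly fewer} clauses than variables. Worse, your stronger claim that ``when $p+q\le 3$ every instance in the family is a yes-instance'' is simply false for $p+q=3$: the Fano plane, viewed as a collection of seven positive $3$-clauses over seven variables, is linear, has every variable appearing exactly three times unnegated and zero times negated, and is the classical example of a non-bicolorable (hence not nae-satisfiable) $3$-uniform hypergraph. So the family with $(p,q)=(3,0)$ contains no-instances, and triviality cannot be the route to membership in \p. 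The paper instead cites Filho's theorem that \textsc{NAE-$k$-Sat-$k$} (each clause with exactly $k$ literals, each variable occurring at most $k$ times) is decidable in polynomial time, which covers $p+q\leq 3$ uniformly; note also that the paper itself leaves open whether all instances of \textsc{Positive Linear $3$-Disjoint NAE-$3$-Sat-E$3$} are yes-instances, which should have been a warning sign against asserting universal nae-satisfiability at $p+q=3$. Your argument for $p+q\leq 2$ (where indeed $m<n$) is correct.
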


\begin{proof}
First, let $p + q = k \geq 4$. Let $\mathcal{I} = (V, \mathcal{C})$ be an instance of \textsc{Positive Linear $k$-Disjoint NAE-$3$-Sat-E$k$} with $\mathcal{C} = \bigcup_{i=1}^k C_i$ such that $C_i$ is a partition of $V$ for each $i \in \{1,2,\ldots,k\}$. Then, we obtain an instance $\mathcal{I}' = (V, \mathcal{C}')$ from $\mathcal{I}$ by replacing each variable appearance in $C_1 \cup C_2 \cup \ldots \cup C_q$ by its negation. Observe that $\mathcal{I}'$ is an instance of \textsc{Linear NAE-$3$-Sat} in which, by construction, each variable appears exactly $p$ times unnegated and exactly $q$ times negated. Since any truth assignment $\beta\colon V \rightarrow \{T,F\}$ nae-satisfies a clause $\{x, y, z\} \in \mathcal{C}$ if and only if $\beta$ nae-satisfies $\{\overline{x}, \overline{y}, \overline{z}\}$, it now follows that $\mathcal{I}$ is a yes-instance if and only if $\mathcal{I}'$ is a yes-instance. As the reduction is clearly polynomial, we conclude that \textsc{Linear NAE-$3$-Sat} is NP-complete for instances with $p+q \geq 4$.   

Second, let $p + q \leq 3$. Since \textsc{NAE-$k$-Sat-$k$} where each clause contains exactly $k$ literals has been shown to be in \p\ by Filho~\cite[Thm.\ 3.3.1.]{Filho}, it immediately follows that \textsc{Linear NAE-$3$-Sat} with $p+q \leq 3$ is in P.
\end{proof}

\subsection{Instances with clauses of size 2 and 3}\label{sub:2or3}

In this section, we prove that \textsc{Positive Linear $3$-Disjoint Not-All-Equal Sat-E$3$} remains \npc\  when loosening the restriction to $3$-clauses by also allowing for $2$-clauses, even when each variable appears in exactly one $2$-clause or when it appears in exactly one $3$-clause. 
\begin{thm}\label{thm:2or3-one-app-in-3cl}
\textsc{Positive Linear $3$-Disjoint Not-All-Equal $(2,3)$-Sat-E$3$} is \npc\ even if each variable appears in exactly one 3-clause.
\end{thm}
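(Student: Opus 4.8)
The plan is to reduce from \textsc{Positive Linear $3$-Disjoint NAE-$3$-Sat-E$3$} if that is known hard, but the paper's own dichotomy shows $3$-disjoint \emph{$3$-Sat} is in \p; so instead I would reduce from the standard \npc\ problem \textsc{Positive NAE-$3$-Sat} (or a suitably restricted variant thereof, e.g.\ with bounded occurrences), mirroring the two-step strategy used in the proof of Theorem~\ref{thm:positive} and Theorem~\ref{thm:pos-lin-E4}. The key structural observation is that allowing $2$-clauses buys us ``equality gadgets'' much more cheaply: a $2$-clause $\{x,y\}$ in the \emph{nae} setting already forces $\beta(x)\neq\beta(y)$, so a pair of $2$-clauses $\{x,u\},\{u,y\}$ forces $\beta(x)=\beta(y)$ through a fresh variable $u$. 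Thus I would build an EQ-gadget on copies $x_1,x_2,x_3$ of a variable $x$ entirely out of a short chain of $2$-clauses together with exactly one $3$-clause per copy, arranged so that each copy $x_i$ and each auxiliary variable lands in exactly one $2$-clause and exactly one $3$-clause, and so that the whole gadget decomposes into three perfect-matching-like parts.

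Concretely, I would proceed as follows. First, given an instance $\mathcal I=(V,\mathcal C)$ of a suitable hard base problem, replace the $i$-th occurrence of each variable $x$ by a fresh copy $x_i$, so the $3$-clauses now involve distinct variables; this yields a collection $\mathcal C_{\mathcal I}'$ of $3$-clauses in which each variable occurs exactly once. Second, for each original variable $x$ with (say) $r$ occurrences, attach an EQ-gadget on $x_1,\dots,x_r$ forcing $\beta(x_1)=\dots=\beta(x_r)$; by the nae-symmetry this then corresponds exactly to a choice of $\beta(x)\in\{T,F\}$. Third — the bookkeeping heart of the proof — verify that the union of $\mathcal C_{\mathcal I}'$ with all EQ-gadgets (i) is linear, (ii) decomposes into the disjoint union of three partitions $C_1,C_2,C_3$ where each $C_i$ is a partition of the full (enlarged) variable set \emph{into $2$- and $3$-clauses}, (iii) every variable sits in exactly one $3$-clause and two $2$-clauses, and (iv) the original $3$-clauses from $\mathcal C_{\mathcal I}'$ all go into the $3$-clause slot of one partition (say $C_1$), with the remaining two slots of $C_1$ and the corresponding slots of $C_2,C_3$ filled by appropriate $2$-clauses of the gadgets. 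If the raw gadget only yields, say, pairwise-disjoint-within-each-$C_i$ but not full partitions (because some variables are ``missing'' from a $C_i$), I would patch this with a padding lemma in the spirit of Lemma~\ref{lem:disjoint-to-partition}: take three renamed copies of the whole instance and glue their leftover variables together with small auxiliary gadgets of $2$- and $3$-clauses so that the counts balance and each $C_i$ becomes a genuine partition.

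The equivalence of the two instances then follows in the usual two directions: a nae-assignment of the constructed instance restricts to a nae-assignment of a copy of $\mathcal C$ (reading off $\beta(x):=\beta(x_1)$, using that the EQ-gadgets force all copies equal), and conversely any nae-assignment $\beta$ of $\mathcal C$ lifts by setting every copy of $x$ to $\beta(x)$ and then extending to the auxiliary variables, which is always possible since each EQ-gadget (and each padding gadget) admits a nae-extension whenever its interface variables are already all equal. For the variant where instead \emph{each variable appears in exactly one $2$-clause} (the parenthetical alternative in the theorem), I would swap the roles of $2$- and $3$-clauses in the gadget design — i.e.\ make the forcing chain out of $3$-clauses two of whose literals are pinned by already-equal copies, with a single $2$-clause per copy — so the same reduction template applies \emph{mutatis mutandis}.

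The main obstacle, as in the earlier theorems, will be the combinatorial design of the EQ-gadget: it must simultaneously (a) enforce equality of all copies of a variable under nae-semantics, (b) have every one of its variables (copies and auxiliaries alike) occur in exactly one $3$-clause and exactly two $2$-clauses, and (c) split cleanly into three parts each of which is a partition (matching) of the gadget's variables — and doing all three at once with \emph{no duplicated clauses} and while preserving linearity is delicate. I expect the correctness argument for the gadget (the ``only if'' direction, showing that a non-constant assignment to the $x_i$ necessarily forces some $3$-clause monochromatic) to require a short case analysis via implication chains exactly like those in the proofs of Lemmas~\ref{lem:eq-gadget-nae-sat} and~\ref{lem:eq-gadget-linear-nae-sat}, but now shorter because the $2$-clauses give the implications essentially for free; the balancing/padding step to turn ``disjoint families'' into ``partitions'' is routine but notationally heavy.
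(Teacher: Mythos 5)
Your high-level strategy matches the paper's: reduce from a bounded-occurrence positive \textsc{NAE-$3$-Sat} variant, split each variable into one fresh copy per occurrence, and exploit the fact that a $2$-clause under nae-semantics is an XOR constraint, so even-length chains of $2$-clauses through fresh auxiliaries force all copies equal. (The paper reduces from \textsc{Positive NAE-$3$-Sat-E$4$}.) However, you explicitly defer the one step that actually carries the theorem — the design of a gadget satisfying simultaneously (a) equality enforcement, (b) the degree pattern ``exactly one $3$-clause and two $2$-clauses per variable'', and (c) decomposition into three partitions — calling it ``the main obstacle'' and ``delicate'' without resolving it. That is a genuine gap, because conditions (b) and (c) are exactly where a naive chain-of-$2$-clauses gadget fails: the endpoints of a chain sit in only one $2$-clause, and, more importantly, the fresh auxiliaries $u$ you introduce to mediate equality have \emph{no} $3$-clause at all, violating the ``each variable appears in exactly one $3$-clause'' requirement. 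A padding lemma in the spirit of Lemma~\ref{lem:disjoint-to-partition} does not obviously fix this, since it must add a $3$-clause to every auxiliary without disturbing the $2$-clause degrees or nae-equivalence.

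The paper's construction resolves both issues with two specific moves you would need to supply. First, for a variable $x_i$ with four occurrences it arranges the $2$-clauses as a closed alternating cycle $x_i^{(1)}, y_i^{(1)}, x_i^{(2)}, y_i^{(2)}, x_i^{(3)}, y_i^{(3)}, x_i^{(4)}, y_i^{(4)}, x_i^{(1)}$ of length $8$; being even, this cycle of XOR constraints is nae-satisfiable, forces all $x_i^{(j)}$ equal (and all $y_i^{(j)}$ equal to the complement), gives every variable exactly two $2$-clauses, and splits into two perfect matchings, yielding two of the three required partitions. Second — and this is the idea your outline is missing — the auxiliaries receive their single $3$-clause by taking a complete duplicate $C_3^Y$ of the occurrence-separated $3$-clause collection with $y_i^{(j)}$ substituted for $x_i^{(j)}$; since the $y$-variables carry exactly the complementary truth values, nae-symmetry guarantees $C_3^Y$ is nae-satisfied whenever $C_3^V$ is, and $C_3^V \cup C_3^Y$ is then a partition of the whole variable set into triples, giving the third partition. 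Without these two constructions (or equivalents), your argument does not go through; with them, no extra padding step is needed at all.
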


\begin{proof}
We show \textsc{NP}-hardness by a reduction from \textsc{Positive Not-All-Equal $3$-Sat-E$4$}, where each clause contains exactly three distinct variables and each variable appears in exactly four clauses (which is \npc~\cite[Thm.\ 1]{DarDoe2020}). Let $\mathcal{I} = (V, C)$ be an instance with a variable set $V = \{x_1, x_2, \ldots, x_n\}$ and a collection of clauses $C = \{c_1, c_2, \ldots, c_m\}$.

For each variable $x_i \in V$, we replace the $j$-th appearance in $C$ by a new variable~$x_i^{(j)}$ and let $C_3^V$ denote the resulting collection of clauses. Further, we introduce
\[
C_2^{V,Y} := \{\{x_i^{(4)},\, y_i^{(4)}\}, \{y_i^{(4)},\, x_i^{(1)}\}\} \cup \bigcup_{j=1}^3 \{\{x_i^{(j)},\, y_i^{(j)}\}, \{y_i^{(j)},\, x_i^{(j+1)}\}\},   
\]    
where $y_i^{(j)}$ are new variables. Let $C_3^Y$ denote the collection of clauses that is obtained if we replace $x_i^{(j)}$ by $y_i^{(j)}$ in $C_3^V$ for $1 \leq i \leq n$ and $1 \leq j \leq 4$. We obtain an instance $\mathcal{I}' = (V', C')$ of \textsc{Positive Not-All-Equal $(2,3)$-Sat-E$3$} as follows: 
\begin{align*}
V' &= \{x_i^{(j)},\, y_i^{(j)} \mid 1 \leq i \leq n \text{ and } 1 \leq j \leq 4\}, \\
C' &= C_2^{V,Y} \cup C_3^V \cup C_3^Y. 
\end{align*}

It is not difficult to verify that $\mathcal{I}'$ is in fact an instance of \textsc{Positive Linear $3$-Disjoint Not-All-Equal $(2,3)$-Sat-E3}. In addition, note that each variable appears in exactly two 2-clauses and in exactly one 3-clause.

We show that $\mathcal{I}$ is a yes-instance if and only if $\mathcal{I}'$ is a yes-instance. 

``$\Rightarrow$'': Let $\beta \colon V \rightarrow \{T, F\}$ be a truth assignment that nae-satisfies $C$. We define a truth assignment $\beta'$ for $V'$ as follows:
\[
\beta'(x_i^{(j)}) = \beta(x_i) \text{ and } \beta'(y_i^{(j)}) \in \{T, F\} \setminus \{\beta(x_i)\}, \quad  1 \leq i \leq n \text{ and } 1 \leq j \leq 4.
\]
Since $\beta'(x_i^{(j)}) \neq \beta'(y_i^{(j)})$, each clause in $C_2^{V,Y}$ is nae-satisfied. For each $x_i \in V$, $\beta'$ sets each copy $x_i^{(j)}$ to the truth value $\beta(x_i)$. Hence, by construction, $\beta'$ nae-satisfies $C_3^V$. It remains to show that $C_3^Y$ is also nae-satisfied. Since $C_3^Y$ is a copy of $C_3^V$ with $x_i^{(j)}$ replaced by $y_i^{(j)}$ and $\beta'$ sets $x_i^{(j)}$ and $y_i^{(j)}$ to opposite truth values, the symmetry of nae-satisfying truth assignments implies that $\beta'$ nae-satisfies $C_3^Y$.   

``$\Leftarrow$'': Let $\beta' \colon V' \rightarrow \{T, F\}$ be a truth assignment that nae-satisfies $C'$. Since each clause in $C_2^{V,Y}$ can be viewed as an XOR condition on the truth values of the two contained variables, it is easy to verify that $\beta'$ satisfies the following condition:
\[
\beta'(x_i^{(1)}) = \beta'(x_i^{(2)}) = \beta'(x_i^{(3)}) = \beta'(x_i^{(4)}), \quad  1 \leq i \leq n. 
\]
We obtain a truth assignment $\beta$ for $V$ by setting $\beta(x_i) = \beta'(x_i^{(1)})$. By construction and the condition above, $\beta$ nae-satisfies $C$.    

The size of $\mathcal{I'}$ is linear in the size of $\mathcal{I}$ and the construction in polynomial time is straightforward. Hence, we conclude that the presented reduction is polynomial.
\end{proof}

\begin{thm}\label{thm:2or3-one-app-in-2cl}
\textsc{Positive Linear $3$-Disjoint Not-All-Equal $(2,3)$-Sat-E$3$} is \npc\ even if each variable appears in exactly one 2-clause.
\end{thm}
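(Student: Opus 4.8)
The plan is to mimic the reduction from the proof of Theorem~\ref{thm:2or3-one-app-in-3cl}, but to swap the roles of $2$-clauses and $3$-clauses in the "cycle gadget" so that each variable ends up in exactly one $2$-clause rather than in exactly one $3$-clause. Again I would reduce from \textsc{Positive Not-All-Equal $3$-Sat-E$4$}, which is \npc\ by \cite[Thm.\ 1]{DarDoe2020}. Let $\mathcal{I}=(V,C)$ be such an instance, $V=\{x_1,\dots,x_n\}$, $C=\{c_1,\dots,c_m\}$, with each clause a positive $3$-clause and each variable appearing in exactly four clauses. As before, I split each variable $x_i$ into four copies $x_i^{(1)},\dots,x_i^{(4)}$, replacing the $j$-th appearance of $x_i$ in $C$ by $x_i^{(j)}$; call the resulting $3$-clause collection $C_3^V$. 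The key change: instead of linking the four copies with a cycle of $2$-clauses, I link them with $3$-clauses through fresh auxiliary variables, reserving for each copy exactly one $2$-clause to pin it to its "shadow" copy. Concretely, introduce shadow variables $y_i^{(j)}$ together with the single $2$-clause $\{x_i^{(j)},y_i^{(j)}\}$ for each $i,j$, and then force all shadows of $x_i$ to agree (in the complementary sense, so all $x_i^{(j)}$ agree) by a small ring of $3$-clauses built from the $y_i^{(j)}$ and additional auxiliary variables; as in Theorem~\ref{thm:2or3-one-app-in-3cl}, duplicate the whole block so that a symmetric flipped copy $C_3^Y$ absorbs the extra appearances and keeps every variable at exactly three appearances.

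The concrete gadget I would use to force $\beta(x_i^{(1)})=\beta(x_i^{(2)})=\beta(x_i^{(3)})=\beta(x_i^{(4)})$ using only $3$-clauses (plus the one reserved $2$-clause per copy) is an equality gadget analogous in spirit to $\operatorname{EQ}$ and $\operatorname{EQ_\text{lin}}$ from Lemmas~\ref{lem:eq-gadget-nae-sat} and~\ref{lem:eq-gadget-linear-nae-sat}: introduce auxiliary variables so that the $3$-clauses, when nae-satisfied, propagate a common truth value along the four $y_i^{(j)}$, and arrange the columns so the gadget's clauses partition into three matchings. Here the design must simultaneously satisfy: (a) each $x_i^{(j)}$ appears in exactly one $2$-clause and two $3$-clauses (one from $C_3^V$, one from the equality gadget), (b) each shadow/auxiliary variable appears in exactly three clauses, all of them $3$-clauses, (c) the whole collection $C'$ is linear, and (d) $C'$ decomposes into three pairwise disjoint partitions $C'_1,C'_2,C'_3$ of $V'$, where each $C'_\ell$ is a union of a matching from the equality gadgets, a matching from $C_3^V$-type clauses, and a matching of the $2$-clauses. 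Since $2$-clauses cannot belong to a partition of $V'$ by themselves, I would need to be careful that the $2$-clauses are distributed so that each $C'_\ell$, after adding the $2$-clauses it owns, still covers every variable exactly once; this is the analogue of Lemma~\ref{lem:disjoint-to-partition} and may require an extra tripling/padding argument so that the number of "deficient" variables in each partition matches up.

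The correctness argument then runs exactly parallel to Theorem~\ref{thm:2or3-one-app-in-3cl}: for the forward direction, given a nae-satisfying $\beta$ for $C$, set $\beta'(x_i^{(j)})=\beta(x_i)$, set each $y_i^{(j)}$ to the opposite value, and fill in the equality-gadget auxiliaries by the canonical assignment that always nae-satisfies them (as in the $\operatorname{EQ}$ lemmas); the $2$-clauses $\{x_i^{(j)},y_i^{(j)}\}$ are nae-satisfied since the two literals differ, $C_3^V$ is nae-satisfied by construction, and the flipped copy $C_3^Y$ is nae-satisfied by symmetry of nae-assignments. For the converse, any nae-satisfying $\beta'$ of $C'$ forces, via the equality gadget, all $y_i^{(j)}$ equal, hence via the reserved $2$-clauses all $x_i^{(j)}$ equal, so $\beta(x_i):=\beta'(x_i^{(1)})$ is well defined and nae-satisfies $C$. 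The main obstacle I anticipate is purely combinatorial bookkeeping in step (d): engineering an explicit equality gadget on $3$-clauses whose clauses split into exactly three perfect matchings \emph{and} whose union with the reserved $2$-clauses and the $C_3^V$-clauses still yields three genuine partitions of the enlarged variable set $V'$ — this is where a Lemma~\ref{lem:disjoint-to-partition}-style copying trick (taking three copies of the base instance to balance the deficient variable counts across the three partitions) will likely be needed, and verifying linearity and disjointness of the three partitions simultaneously requires the same kind of careful case check as in the proofs of Theorems~\ref{thm:positive} and~\ref{thm:pos-lin-E4}.
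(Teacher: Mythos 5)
Your proposal has a genuine gap at its core: the equality gadget you need is never constructed, and it is precisely the hard part. In your design each copy $x_i^{(j)}$ (or each shadow $y_i^{(j)}$) has only one or two appearances left over for the equality-forcing $3$-clauses, whereas the gadgets $\operatorname{EQ}$ and $\operatorname{EQ_\text{lin}}$ from Lemmas~\ref{lem:eq-gadget-nae-sat} and~\ref{lem:eq-gadget-linear-nae-sat} use three appearances of each external variable to propagate equality; it is far from clear that an analogous positive, linear gadget exists under your tighter degree budget, and you would additionally have to make its clauses, together with the reserved $2$-clauses and the $C_3^V$-clauses, split into three genuine partitions of $V'$. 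You explicitly flag all of this as ``anticipated obstacles,'' which means the reduction is not actually carried out.

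The paper's proof avoids every one of these difficulties by starting from the stronger result already in hand, \textsc{Positive Linear $4$-Disjoint NAE-$3$-Sat-E$4$} (Theorem~\ref{thm:pos-lin-E4}), rather than from \textsc{Positive NAE-$3$-Sat-E$4$}. Given $\mathcal{C}=\bigcup_{k=1}^4 C_k$ with each $C_k$ a partition of $V$, one first negates every appearance in $C_1\cup C_2$ (which preserves nae-satisfiability, since $\beta$ nae-satisfies $\{x,y,z\}$ iff it nae-satisfies $\{\overline{x},\overline{y},\overline{z}\}$), and then replaces $\overline{x_i}$ by a fresh positive variable $x_i^-$ and $x_i$ (in $C_3\cup C_4$) by $x_i^+$, adding the single $2$-clause $\{x_i^+,x_i^-\}$, which under nae-semantics enforces $x_i^+\neq x_i^-$ and thus simulates the negation. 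The three pairwise disjoint partitions are then the set of all $2$-clauses, $C''_1\cup C''_3$, and $C''_2\cup C''_4$; each new variable appears exactly three times, exactly once in a $2$-clause, and linearity is inherited. No new equality gadget and no Lemma~\ref{lem:disjoint-to-partition}-style padding is needed. If you want to rescue your approach you must exhibit the missing gadget explicitly and verify properties (a)--(d); as written, the argument does not go through.
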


\begin{proof}
Let $\mathcal{I} = (V, \mathcal{C})$ be an instance of \textsc{Positive Linear $k$-Disjoint NAE-$3$-Sat-E$4$} with $\mathcal{C} = \bigcup_{i=k}^4 C_k$ such that (i) $C_k$ is a partition of $V$ for each $k \in \{1,2,3,4\}$, (ii) $C_i \cap C_j = \emptyset$ if $i \neq j$, and (iii) $\mathcal{C}$ is linear. We obtain an instance of \textsc{Positive Linear $3$-Disjoint Not-All-Equal $(2,3)$-Sat-E$3$} in two steps. 

We obtain an instance $\mathcal{I}' = (V'=V, \mathcal{C}')$ of \textsc{Linear NAE-$3$-Sat} with $\mathcal{C}'= \bigcup_{i=k}^4 C_k'$ from $\mathcal{I}$ by, for $k \in \{1,2\}$, replacing each variable appearance in $C_k$ by its negation to obtain $C'_k$ and, for $k \in \{3,4\}$, setting $C_k'=C_k$. As observed in the proof of Theorem~\ref{thm:nae-3-sat-(p,q)}, $\mathcal{I}$ is a yes-instance if and only if $\mathcal{I}'$ is a yes-instance.  

Next, we construct an instance $\mathcal{I}''=(V'', \mathcal{C}'')$ of \textsc{Positive Linear $3$-Disjoint Not-All-Equal $(2,3)$-Sat-E$3$} as follows. For each $x_i \in V'$, we
\begin{enumerate}[(1)]
\item introduce two new variables $x_i^+$ and $x_i^-$ and a new 2-clause $\{x_i^+, x_i^-\}$,
\item replace each appearance of $\overline{x_i}$ in $C'_1 \cup C'_2$ by $x_i^-$, and
\item replace each appearance of $x_i$ in $C'_3 \cup C'_4$ by $x_i^+$.
\end{enumerate}
Let $C''_k$ denote the collection of clauses obtained from $C'_k$ by replacing variables with $x_i^+$ and $x_i^-$, respectively. Further, let 
\[
\mathcal{D}_1 = \{\{x_i^+, x_i^-\} \mid x_i \in V'\}\text{, } \mathcal{D}_2 = C''_1 \cup C''_3 \text{ and } \mathcal{D}_3 = C''_2 \cup C''_4.
\]
Then, we have $V'' = \{x_i^+, x_i^- \mid x_i \in V'\}$ and $\mathcal{C}'' = \mathcal{D}_1 \cup \mathcal{D}_2 \cup \mathcal{D}_3$. It is straightforward to verify that $\mathcal{I''}$ is indeed an instance of \textsc{Positive Linear $3$-Disjoint Not-All-Equal $(2,3)$-Sat-E$3$}, where $\mathcal{D}_1$ is a partition of $V''$ into subsets of size two and $\mathcal{D}_k$, for $k \in \{2,3\}$, is a partition of $V''$ into subsets of size three. We now show that $\mathcal{I}'$ is a yes-instance if and only if $\mathcal{I}''$ is a yes-instance.

``$\Rightarrow$'': Let $\beta'\colon V'\rightarrow \{T,F\}$ be a truth assignment that nae-satisfies $\mathcal{C}'$. Then we obtain a truth assignment $\beta''$ for $V''$ by setting $\beta''(x_i^+) = \beta'(x_i)$ and $\beta''(x_i^-) = \{T,F\}\setminus \beta'(x_i)$ for each $x_i \in V'$. Observe that $\beta''(x_i^+) \neq \beta''(x_i^-)$ for each $x_i \in V'$ and, thus, each clause in $\mathcal{D}_1$ is nae-satisfied. Assume towards a contradiction that there exists a clause $c_j \in \mathcal{D}_2 \cup \mathcal{D}_3$ such that $c_j$ is not nae-satisfied by $\beta''$. If $c_j = \{x_r^+, x_s^+, x_t^+\}$ for indices $r,s,t \in \{1,2,\ldots, |V'|\}$, then $\beta'$ does not nae-satisfy the clause $\{x_r, x_s, x_t\}$ in $C'_3 \cup C'_4$; a contradiction. Otherwise, $c_j = \{x_r^-, x_s^-, x_t^-\}$ and, thus, $\{\overline{x_r}, \overline{x_s}, \overline{x_t}\}$ in $C'_1 \cup C'_2$ is not nae-satisfied by $\beta'$; again a contradiction. Hence, $\beta''$ nae-satisfies each clause in $\mathcal{C}''$. 

``$\Leftarrow$'': Let $\beta''\colon V''\rightarrow \{T,F\}$ be a truth assignment that nae-satisfies $\mathcal{C}''$. Then we obtain a truth assignment $\beta'$ for $V'$ by setting $\beta'(x_i) = \beta''(x_i^+)$ for each $x_i \in V'$. Assume towards a contradiction that there exists a clause $c_j$ in $\mathcal{C}'$ such that $c_j$ is not nae-satisfied by $\beta'$. If $c_j = \{x_r, x_s, x_t\} \in C'_3 \cup C'_4$, then $\beta''$ does not nae-satisfy the clause $\{x_r^+, x_s^+, x_t^+\}\in \mathcal{D}_2 \cup \mathcal{D}_3$; a contradiction. Otherwise, $c_j = \{\overline{x_r}, \overline{x_s}, \overline{x_t}\} \in C'_1 \cup C'_2$ and $\{x_r^-, x_s^-, x_t^-\}\in \mathcal{D}_2 \cup \mathcal{D}_3$ is not nae-satisfied by $\beta''$, which is again a contradiction. Hence, $\beta'$ nae-satisfies each clause in~$\mathcal{C}'$. 

As the reduction is polynomial, we conclude that \textsc{Positive Linear $3$-Disjoint Not-All-Equal $(2,3)$-Sat-E$3$} is \npc\ even if each variable appears in exactly one 2-clause.
\end{proof}

\subsection{Remarks on Planarity}\label{sub:planar} 

We conclude this section with some concise remarks for the case when the planarity constraint is added, i.e., when the incidence graph is required to be planar.
It was shown by Pilz~\cite[Thm.\ 12]{Pilz2019} that every instance of \textsc{Planar Sat} in which each clause contains at least three unnegated appearances or at least three negated appearances of distinct variables is satisfiable. Noting that his proof constructs a \emph{nae-satisfying} truth assignment, we obtain the following theorem.

\begin{thm}\label{thm:planar}
Each instance of \textsc{Positive Planar NAE-Sat} with at least three distinct variables per clause is nae-satisfiable. 
\end{thm}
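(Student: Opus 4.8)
The plan is to invoke the cited result of Pilz~\cite[Thm.\ 12]{Pilz2019} essentially as a black box, after verifying that our hypotheses are subsumed by its hypotheses. First I would recall precisely what Pilz's theorem asserts: for an instance of \textsc{Planar Sat} in which every clause contains at least three distinct variables appearing \emph{all} unnegated or \emph{all} negated, Pilz constructs an explicit truth assignment that satisfies the instance. The crucial observation — already flagged in the sentence preceding the theorem — is that Pilz's construction in fact yields a truth assignment in which no clause is monochromatic, i.e.\ a \emph{nae-satisfying} assignment, not merely a satisfying one. I would state this strengthened reading of his proof as the starting point.

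Next I would check the reduction of hypotheses. Let $(V,\mathcal{C})$ be an arbitrary instance of \textsc{Positive Planar NAE-Sat} in which each clause has at least three distinct variables. Since every clause is positive, in particular every clause consists of at least three distinct \emph{unnegated} variables, so the hypothesis of Pilz's theorem is met verbatim; moreover the incidence graph is planar by assumption, which is exactly the planarity requirement there (one should note the incidence graph used here and the one used by Pilz agree, or differ only by the irrelevant distinction between a variable-literal and a variable node, so planarity transfers). Hence Pilz's construction applies and produces an assignment $\beta$ that — by the strengthened reading — nae-satisfies every clause of $\mathcal{C}$. Therefore the given instance is a yes-instance, and since the instance was arbitrary, every such instance is nae-satisfiable.

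I would close by remarking, as the paper's surrounding text does, that this immediately trivializes the planar restrictions of all the \textsc{NAE-$3$-Sat} variants studied earlier: adding the \textsc{Planar} prefix to \textsc{Positive NAE-$3$-Sat} (and a fortiori to any of its further-restricted sub-variants, since each still has three distinct variables per clause) turns the decision problem into the constant ``yes''.

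\textbf{Main obstacle.} The only genuine point of care is not a calculation but a matter of faithfully reading Pilz's proof: one must confirm that the truth assignment his argument builds really does leave no clause monochromatic (so that it is nae-satisfying and not just satisfying), and that his notion of planarity for the instance coincides with the incidence-graph planarity used in this paper. Once those two bookkeeping checks are in place, the theorem follows with essentially no further work.
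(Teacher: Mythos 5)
Your proposal is correct and follows exactly the paper's own route: the paper's proof is the one-line observation that Pilz's construction applies without modification and already yields a nae-satisfying assignment, which is precisely the argument you elaborate. The two bookkeeping checks you flag (that Pilz's assignment is non-monochromatic on each clause, and that positivity plus three distinct variables per clause meets his hypothesis) are the right ones and are all the paper relies on.
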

\begin{proof}
The construction by Pilz~\cite[Thm.\ 12]{Pilz2019} can be used without modifications to obtain a nae-satisfying truth assignment for an instance of \textsc{Positive Planar NAE-Sat} with at least three distinct variables per clause.
\end{proof}

By the above result, since we  focus on restricted variants of \textsc{Positive NAE-Sat} with exactly three distinct variables per clause in this paper, each decision problem we proved to be \npc\ in this setting becomes trivial if the planarity constraint is added as each such instance is nae-satisfiable. 

Previously, Hasan et al.~\cite[Thm.\ 1]{Hasan2022} showed that each instance of \textsc{Positive Planar $3$-Connected NAE-$3$-Sat} is nae-satisfiable. For an incidence graph to be 3-connected, it is necessary but not sufficient that each clause has at least three distinct variables. Hence, Theorem~\ref{thm:planar} identifies a strictly larger class of trivial instances of \textsc{Positive NAE-Sat} and this class properly contains all instances of \textsc{Positive Planar $3$-Connected NAE-$3$-Sat}. However, note that Hasan et al.~\cite[Thm.\ 1]{Hasan2022} present a linear time algorithm to find a nae-satisfying truth assignment for \textsc{Positive Planar 3-connected NAE-3-Sat} whereas the construction by Pilz~\cite[Thm.\ 12]{Pilz2019} yields a quadratic algorithm.

\newpage

\section{Conclusion}\label{sec:conclusion}
In this work, we have analyzed the computational complexity status of a variant of \textsc{NAE-$3$-Sat}, where the clauses form a disjoint union of $k$ partitions of the set of variables and each pair of distinct clauses shares at most one variable. Our results settle a computational complexity dichotomy for \textsc{Positive Linear $k$-Disjoint NAE-$3$-Sat-E$k$} with respect to the number $k$ of partitions: for $k\geq 4$, the problem is \npc, while for $k\leq 3$ it is in \p; in particular, for $k\in \{1,2\}$, each instance of \textsc{Positive Linear $k$-Disjoint NAE-$3$-Sat-E$k$} is a yes-instance. As a byproduct of our results, it follows that \textsc{Linear NAE-$3$-Sat},  where each variable appears exactly $p$ times unnegated and exactly $q$ times negated, is \npc\ for each fixed value of $p+q\geq 4$.

In addition, observe that our results imply that loosening the restriction to 3-clauses by also allowing for 2-clauses reduces the number of partitions that sets the boundary between \p\ and \npcn\ from $4$ to $3$. The latter \npcn\ result holds in either of the cases that each variable appears in exactly one $3$-clause or in exactly one $2$-clause.

Our results for \textsc{NAE-$3$-Sat} directly imply analogous results for the computational complexity of deciding the existence of a 2-coloring for linear 3-uniform $k$-regular hypergraphs whose edges are given as a decomposition into $k$ perfect matchings: This decision problem is \npc\ for $k \geq 4$ and, otherwise, in~\p.

While several results have been provided in this paper, some interesting questions still remain open as potential  directions for future research. For instance, while we conjecture  this not to be the case, to the best of our knowledge it is not known whether there are instances of \textsc{Positive Linear $3$-Disjoint NAE-$3$-Sat-E$3$} which are not nae-satisfiable.

\bibliographystyle{alpha}
\bibliography{mylit}

\begin{thebibliography}{PSSW14}

\bibitem[DD20]{DarDoe2020}
A.~Darmann and J.~Döcker.
\newblock On a simple hard variant of \textsc{Not-All-Equal 3-Sat}.
\newblock {\em Theoretical Computer Science}, 815:147--152, 2020.

\bibitem[DSA15]{dehghan15}
A.~Dehghan, M.~Sadeghi, and A.~Ahadi.
\newblock On the complexity of deciding whether the regular number is at most
  two.
\newblock {\em Graphs and Combinatorics}, 31(5):1359--1365, 2015.

\bibitem[Fil19]{Filho}
I.T.F.A. Filho.
\newblock Characterizing {Boolean} satisfiability variants.
\newblock Master's thesis, Massachusetts Institute of Technology, 2019.

\bibitem[HMR22]{Hasan2022}
M.M. Hasan, D.~Mondal, and M.S. Rahman.
\newblock Positive planar satisfiability problems under 3-connectivity
  constraints.
\newblock {\em Theoretical Computer Science}, 917:81--93, 2022.

\bibitem[HY18]{Henning2018}
M.A. Henning and A.~Yeo.
\newblock Not-all-equal {3-SAT} and 2-colorings of 4-regular 4-uniform
  hypergraphs.
\newblock {\em Discrete Mathematics}, 341(8):2285--2292, 2018.

\bibitem[KT02]{kratochvil02}
J.~Kratochv\'{i}l and Z.~Tuza.
\newblock On the complexity of bicoloring clique hypergraphs of graphs.
\newblock {\em Journal of Algorithms}, 45(1):40--54, 2002.

\bibitem[LS23]{li2023}
Y.~Li and S.~Spirkl.
\newblock The $r$-coloring and maximum stable set problem in hypergraphs with
  bounded matching number and edge size.
\newblock {\em Discrete Mathematics}, 346(4):113342, 2023.

\bibitem[Pil19]{Pilz2019}
A.~Pilz.
\newblock Planar \textsc{3-Sat} with a clause/variable cycle.
\newblock {\em Discrete Mathematics \& Theoretical Computer Science}, 21(3),
  2019.

\bibitem[PSSW14]{porschen14}
S.~Porschen, T.~Schmidt, E.~Speckenmeyer, and A.~Wotzlaw.
\newblock \textsc{XSAT} and \textsc{NAE-SAT} of linear {CNF} classes.
\newblock {\em Discrete Applied Mathematics}, 167:1--14, 2014.

\bibitem[Sch78]{schaefer78}
T.~J. Schaefer.
\newblock The complexity of satisfiability problems.
\newblock In {\em Proceedings of the Tenth Annual ACM Symposium on Theory of
  Computing}, pages 216--226, 1978.

\bibitem[ST13]{schoening13}
U.~Sch{\"o}ning and J.~Tor{\'a}n.
\newblock {\em The Satisfiability Problem: Algorithms and Analyses}, volume~3
  of {\em Mathematik für Anwendungen}.
\newblock Lehmanns Media, 2013.

\end{thebibliography}

\end{document}